\documentclass[11pt]{article}
\usepackage[left=1in,top=1in,right=1in,bottom=1in,head=.1in,nofoot]{geometry}

\usepackage{setspace,url,bm,amsmath} %

\usepackage{titlesec} %
\titlelabel{\thetitle.\quad} %

\titleformat*{\section}{\bf\Large\center}

\usepackage{graphicx} %
\usepackage{bbm}
\usepackage{latexsym}
\usepackage{caption}
\usepackage[margin=20pt]{subcaption}
\usepackage{hyperref}
\usepackage{booktabs}
\usepackage{multirow}

\usepackage{enumitem}

\usepackage[table]{xcolor}

\newcommand{\GG}[1]{}

\usepackage{amsthm}
\usepackage{amssymb}
\usepackage{amsmath}
\usepackage{color}

\usepackage{comment}
\theoremstyle{definition}

\newtheorem*{theorem*}{Theorem}
\newtheorem{theorem}{Theorem}
\newtheorem*{rmk*}{Remark}

\newtheorem{lemma}{Lemma}

\newtheorem{remark}{Remark}
\newtheorem{corollary}{Corollary}
\newtheorem*{corollary*}{Corollary}

\def\rank{\text{r}}

\usepackage{natbib} %
\bibpunct{(}{)}{;}{a}{}{,} %

\usepackage{etoolbox} %
\apptocmd{\sloppy}{\hbadness 10000\relax}{}{} %

\usepackage{color}
\usepackage{listings}

\def\Pr{\mathbb{P}}

\def\Var{\text{Var}}

\def\I{\mathbbm{1}}
\def\E{\mathbb{E}}

\def\bs{\boldsymbol}

\def\treat{\textup{1}}

\def\control{\textup{0}}

\usepackage[sc]{mathpazo}
\RequirePackage[normalem]{ulem}

\usepackage{soul}

\usepackage{accents}
\newcommand{\dtilde}[1]{\tilde{\raisebox{0pt}[0.85\height]{$\tilde{#1}$}}}

\usepackage{subcaption}
\usepackage[textsize=tiny, textwidth = 2cm, shadow]{todonotes}

\begin{document}

\onehalfspacing

\title{\bf 
Randomization Tests for Distributions of Individual Treatment Effects via Combined Rank Statistics
}
\author{
David Kim, Yongchang Su, Jake Bowers, and Xinran Li
\footnote{
David Kim is a Doctoral Candidate, Department of Statistics, University of Illinois 
at Urbana-Champaign, Champaign, IL 61820 (e-mail: 
\href{mailto:davidk9@illinois.edu}{davidk9@illinois.edu}). Yongchang Su recently received his PhD in Statistics from the University of Illinois at Urbana-Champaign (\href{mailto:ysu17@illinois.edu}{ysu17@illinois.edu}). 
Jake Bowers is Professor, Department of Political Science and Department of Statistics, University of Illinois at Urbana-Champaign, Champaign, IL 61820 (e-mail: \href{mailto:jwbowers@illinois.edu}{jwbowers@illinois.edu}). 
Xinran Li is Assistant Professor, Department of Statistics, University of Chicago, Chicago, IL, USA (e-mail: \href{mailto:xinranli@uchicago.edu}{xinranli@uchicago.edu}). 
Li was partly supported by
the U.S. National Science Foundation (DMS-2400961).
}
}
\date{}
\maketitle

\allowdisplaybreaks

\begin{abstract}

What proportion of treated units actually benefited from an experimental intervention? What is the median or the largest individual treatment effect? This paper develops methods for answering such questions about the distribution of individual causal effects in randomized experiments. Existing approaches require the analyst to select a rank-based test statistic before observing the data. A poor choice can substantially reduce power, while searching over multiple test statistics and adjusting for multiplicity using Bonferroni correction also incurs power loss. We propose inference procedures that adaptively combine multiple rank-based statistics while maintaining finite-sample validity. For stratified experiments, we further develop weighting schemes that effectively aggregate evidence across strata of heterogeneous sizes. The resulting combined test achieves power comparable to, or exceeding, that of the best individual test, without requiring prior knowledge of the optimal statistic. When applied to a randomized experiment evaluating a teacher training program, the combined test suggests that roughly half of treated teachers benefited, whereas a single rank-based test may indicate only a small minority. Thus, the choice of test determined whether the program appears broadly successful or narrowly effective.
\end{abstract}

{\bf Keywords}: 
Design-based inference;
Stratified randomized experiment;
Rank transformation;
Treatment effect quantile;
Weighted stratified rank statistic

\section{Introduction}

\subsection{Heterogeneous treatment effects and randomization-based inference}

What proportion of treated units actually benefited from an experimental intervention? When outcomes are unbounded, the average treatment effect does not indicate whether only a few benefited or nearly every unit did. A single average is compatible with many distributions of individual treatment effects---each calling for a different policy response.
Recently, two lines of research have developed to understand this heterogeneity. One prominent line of research focuses on inference for subgroup or conditional average treatment effects defined with respect to observed covariates \citep[see, e.g.,][]{imai2013, tian2014, NieWager2020, qu2024randomization}, as well as on optimal treatment rules that map observed covariates to treatment assignments to maximize overall outcomes in a population \citep[see, e.g.,][]{manski2004, zhang2012, athey2021observational}. A second line, to which we contribute,  
focuses on the distribution of individual treatment effects \citep{heckman1997}.  
Characterizing the full distribution---or its quantiles---allows researchers to address questions such as: What proportion of individuals benefit from, or are harmed by, the treatment? How large are the effects among the most or least affected subgroups?

We focus here on inference for distributions and quantiles of individual treatment effects. We adopt the randomization-based inference framework, which views potential outcomes as fixed constants and uses randomness solely from the treatment assignment as the ``reasoned basis'' \citep{Fisher:1935, Neyman:1923}.\footnote{Note that we focus on quantiles of individual treatment effects, rather than quantile treatment effects. The latter compare, for example, the $p$-th quantile of the potential outcome under treatment with the $p$-th quantile of the potential outcome under control \citep{koenker2017quantile}.}
Randomization-based inference, sometimes also called design-based or finite population inference, requires no distributional assumptions---neither independence nor identical distributions across units---and relies only on the experimental design.
Our work builds on recent randomization-based inference for treatment effect distributions and quantiles. 
Specifically, \citet{CDLM21quantile} proposed randomization tests with rank-based test statistics for inference on quantiles of individual treatment effects in completely randomized experiments. \citet{SL22quantile} extended that work to stratified randomized experiments and to sensitivity analyses in matched observational studies. \citet{ZL24quantile} reinterpreted these methods as constructing prediction intervals for effect quantiles within treated and control groups, 
which can then be combined to yield improved confidence intervals for effect quantiles across all units. 

\subsection{Our contribution}

The power of existing randomization tests for treatment effect quantiles depends on the choice of rank statistics. For example, the Stephenson rank sum statistic places greater weight on larger ranks through a tuning parameter \citep{Stephenson85, Rosenbaum2007dramatic} and can increase power when potential outcomes under treatment exhibit heavier right tails than those under control. \citet{CDLM21quantile} show that such statistics can yield more informative confidence bounds compared to the classical Wilcoxon rank sum statistic. However, that paper also presents situations where unlucky choices yield less powerful results than the Wilcoxon rank sum test. Fundamentally, the optimal choice of the rank statistic depends on the underlying unknown distributions of potential outcomes. Selecting and reporting the most favorable statistic after examining multiple options may therefore lead to substantial size distortion due to multiple testing.

To address this limitation, we develop inference procedures that adaptively combine multiple rank statistics in a data-driven manner, while preserving valid type I error control. Specifically, to test null hypotheses on treatment effect quantiles, we propose a test statistic that maximizes over several appropriately standardized rank statistics, with the standardization depending on whether the design is completely randomized or stratified. Combining multiple rank statistics in this way provides researchers a high-powered test without the burden of tuning parameter choice: as we show below, the new test provides confidence bounds that are more informative than, or nearly as informative as, those from the best choice of rank statistic across all quantiles of the treatment effect distribution. In an application to the randomized teacher training intervention in \citet{HSMHSD10}, we show that unlucky choice of rank test would suggest relatively few teachers benefited, whereas the combined rank test suggests that roughly half did.

In stratified experiments with varying stratum sizes, naive aggregation of rank statistics across strata may effectively ignore smaller strata---the Stephenson rank sum statistic can differ by orders of magnitude across strata of different sizes. We draw on the literature on classical nonparametric two-sample tests, such as \citet{van1960combination} and \citet{puri1965combination}, and propose normalized rank transformations and principled weighting schemes to balance contributions across strata. We also investigate two natural approaches to combining strata-specific rank statistics--—combining them before or after aggregation across strata---and compare their performance. 
We formulate the computations in these approaches as integer linear programming problems, which can often be solved efficiently.

The remainder of this paper is organized as follows. 
Section \ref{sec:framework} introduces the framework and notation. Section \ref{sec:CRE} develops adaptive inference procedures for quantiles of individual treatment effects in completely randomized experiments. Section \ref{sec:SRE} extends these to stratified randomized experiments and considers additionally weighting strategies for aggregating test statistics across strata. 
Section \ref{sec:simu} summarizes the main simulation findings. 
Section \ref{sec:app} applies the proposed methods to an empirical study of an experimental professional development program. Section \ref{sec:discussion} concludes with a brief discussion. All proofs of theorems and simulation studies are relegated to the supplementary material.

\section{Framework and Notation}\label{sec:framework}

\subsection{Potential outcomes and treatment assignment}

We consider an experiment with $n$ units and two treatment arms (labeled as
treatment and control). For each unit $1\le i\le n$, let $Y_i(1)$ and $Y_i(0)$
denote the treatment ($Z_i=1$) and control ($Z_i=0$) potential outcomes \citep{Neyman:1923,
Rubin:1974}, where $Z_i$ is the treatment assignment. The observed outcome $Y_i$ is one of the two potential outcomes depending on the value of the treatment assignment: $Y_i = Z_i Y_i(1) +
(1-Z_i) Y_i(0)$. The additive treatment effect for unit
$i$ is defined as $\tau_i = Y_i(1) - Y_i(0)$. For
descriptive convenience, let $\bs{Y}(1) = (Y_1(1), Y_2(1), \ldots,
Y_n(1))^\top$, $\bs{Y}(0) = (Y_1(0), Y_2(0), \ldots, Y_n(0))^\top$, $\bs{Y} =
(Y_1, Y_2, \ldots, Y_n)^\top$, $\bs{\tau} = (\tau_1, \tau_2, \ldots,
\tau_n)^\top$ and $\bs{Z} = (Z_1, Z_2, \ldots, Z_n)^\top$ denote the vectors of
treatment potential outcomes, control potential outcomes, observed outcomes,
individual treatment effects, and treatment assignments, respectively.

Throughout the paper, our statistical inference is randomization-based,  where
all potential outcomes $Y_i(1)$s and $Y_i(0)$s are viewed as fixed
constants or equivalently being conditioned on.
In particular, we regard the treatment assignments $Z_i$s as the sole source of randomness for statistical inference on the causal effects.

We focus on two common classes of experimental designs: the
completely randomized experiment (CRE) and the more general stratified
randomized experiment (SRE).
Under the CRE, 
the treatment assignment vector $\bs{Z}$ takes a particular value
$\bs{z} = (z_1, z_2, \ldots, z_n)\in \{0,1\}^n$ with probability $n_\treat!
n_\control!/n!$ if $\sum_{i=1}^n z_i = n_\treat$ and zero otherwise, where
$n_\treat$ and $n_\control \equiv n - n_\treat$ are some fixed positive
integers. We defer the discussion of the SRE to Section \ref{sec:SRE}.

\subsection{Treatment effect distribution and quantiles}

We are interested in inferring the distribution of individual treatment effects
for the $n$ units in the experimental pool, where we write the cumulative
distribution function of the $\tau_i$s as $F_n(c) = n^{-1} \sum_{i=1}^n
\I(\tau_i \le c)$ for $c\in \mathbb{R}$. 
Let $F_n^{-1}(\beta)$  for $0\le \beta \le 1$ be the corresponding treatment effect
quantile function. 
Since we are making inference for a finite set of $n$ units, the treatment effect quantile function can only take values from
the sorted individual treatment effects $\tau_{(1)} \le \tau_{(2)} \le \ldots
\le \tau_{(n)}$. Specifically, $F_n^{-1}(\beta) = \tau_{(k)}$ with $k=\lceil
n\beta \rceil$, the ceiling of $n\beta$. Therefore, inference about
$F_n(c)$ and $F_n^{-1}(\beta)$ is equivalent to inference about the sorted individual treatment effect
$\tau_{(k)}$ for $1\le k \le n$. Later in the paper, we will construct simultaneous
prediction intervals for all the $\tau_{(k)}$s, which in turn yield
confidence bands for the entire distribution function $F_n(\cdot)$ and quantile
function $F_n^{-1}(\cdot)$ of the individual treatment effects.

We tackle inference for treatment effect quantiles by considering the following classes of null hypotheses on treatment effects among treated and control units, respectively:
\begin{align}\label{eq:H_kc_t}
    H_{k, c}^\treat:& \tau_{\treat(k)}\le c \Longleftrightarrow n_{\treat}(c) \le n_\treat-k  \Longleftrightarrow \bs{\tau} \in \mathcal{H}_{k,c}^\treat, \quad (1\le k \le n_\treat, c\in \mathbb{R})
    \\
    \label{eq:H_kc_c}
    H_{k, c}^\control:& \tau_{\control(k)}\le c \Longleftrightarrow n_{\control}(c) \le n_\control-k \Longleftrightarrow \bs{\tau} \in \mathcal{H}_{k,c}^\control, \quad (1\le k \le n_\control, c\in \mathbb{R})
\end{align}
where $\tau_{\treat(1)} \le \tau_{\treat(2)} \le \ldots \le \tau_{\treat(n_\treat)}$ are the sorted individual treatment effects for the treated units, $n_{\treat}(c) = \sum_{i=1}^n Z_i \I(\tau_i > c)$ denotes the number of treated units with individual effects exceeding $c$, and $\mathcal{H}_{k,c}^\treat \equiv \{\bs{\delta}\in \mathbb{R}^n: \sum_{i=1}^n Z_i \I(\delta_i > c) \le n_\treat-k\}$ denotes the set of all possible treatment effect vectors under the null hypothesis focusing on the treated units. 
We define $\tau_{\control(k)}$, $n_{\control}(c)$ and  $\mathcal{H}_{k,c}^\control$ analogously for control units.
The double arrows in \eqref{eq:H_kc_t} and \eqref{eq:H_kc_c} connect three equivalent ways of stating each null hypothesis: a claim about the individual effect at rank $k$, a count of how many individual effects exceed a threshold, and membership in a set of possible effect vectors.

In other words, the null hypothesis $H_{k,c}^\treat$ states: ``The $k$th smallest individual treatment effect among treated units is at most $c$.'' If we sort the 164 treated teachers in the \citet{HSMHSD10} study by how much each one benefited, then $H_{k,c}^\treat$ claims that teacher at rank $k$ gained no more than $c$ points on the outcome due to the training.

We make a few remarks. First, the null hypotheses in \eqref{eq:H_kc_t} and \eqref{eq:H_kc_c} are themselves random, depending on the observed treatment assignment $\bs{Z}$. This differs from traditional fixed null hypotheses. 
We say a $p$-value is valid for a random null hypothesis $H$, if and only if $\Pr(p \le \alpha \text{ and } H \text{ holds})\le \alpha$ for $\alpha \in (0,1)$. 
Importantly, inverting such a valid $p$-value can yield valid {\it prediction} sets for certain {\it random} quantities, such as $\tau_{\treat(k)}$s and $\tau_{\control(k)}$s, specified by the null hypotheses.

Second, once we have valid tests for the null hypotheses in \eqref{eq:H_kc_t}, by switching the treatment labels and changing the outcome signs, we can obtain valid tests for the null hypotheses in \eqref{eq:H_kc_c} for treatment effects among control units. 
Therefore, without loss of generality, we will focus on the hypotheses in \eqref{eq:H_kc_t}. 
Moreover, by only changing the outcome signs, we can test null hypotheses as in \eqref{eq:H_kc_t} and \eqref{eq:H_kc_c} but with alternative hypotheses favoring smaller treatment effects; see Section \ref{sec:discussion} for further discussion. 

Third, we will invert the tests for hypotheses in \eqref{eq:H_kc_t} and \eqref{eq:H_kc_c} to construct simultaneous prediction intervals for treatment effect quantiles among treated and control units, respectively. 
With a Bonferroni correction, these can be further combined to provide simultaneous confidence intervals for treatment effect quantiles $\tau_{(k)}$s among all units.

\subsection{Existing approaches and their limitations}\label{subsec:limitation}

\cite{CDLM21quantile} and \cite{SL22quantile} recently proposed valid tests for the following null hypotheses on quantiles of individual treatment effects under the CRE and SRE, respectively:
\begin{align}\label{eq:H_kc}
    H_{k, c}: \tau_{(k)}\le c \Longleftrightarrow n(c) \le n-k, 
    \quad (1\le k \le n, c\in \mathbb{R}),
\end{align}
where $n(c)$ denotes the number of units whose individual effects exceed $c$.
$H_{k,c}$ states that the $k$th smallest individual treatment effect among \emph{all} units
 does not exceed $c$.
 \citet{ZL24quantile} showed that $p$-values for $H_{k, c}$ in \eqref{eq:H_kc}
 are also valid for $H_{k-n_\control, c}^\treat$ in \eqref{eq:H_kc_t} or
 $H_{k-n_\treat, c}^\control$ in \eqref{eq:H_kc_c}. 
 This enables a
 divide-and-combine strategy: construct prediction intervals for effect
 quantiles separately among treated and among control units, then combine them
 (with a Bonferroni correction) to obtain confidence intervals for effect
 quantiles among all units.

The approaches in \cite{CDLM21quantile}, \cite{SL22quantile} and
\citet{ZL24quantile} work for a general class of (stratified) rank sum
statistics with various rank transformations. For example, the Stephenson rank
statistics \citep{Stephenson85} can produce more powerful tests for \eqref{eq:H_kc}
(as well as \eqref{eq:H_kc_t} and \eqref{eq:H_kc_c}) than the commonly used
Wilcoxon rank statistic, especially when the treatment potential outcomes tend
to have heavier right tails than the control ones.\footnote{This is why
\citet{rosenbaum2008aberrant} use these tests to detect "aberrant" or extreme treatment
effects.} However, the ``optimal'' choice of rank
statistics generally depends on the underlying true data-generating
process, which depend on unobserved potential outcomes. Furthermore, even under
the same data-generating process, we may prefer different rank statistics
when inferring different quantiles of the individual treatment effects. 
For example, a test statistic that has good sensitivity for detecting maximum effects may have poor sensitivity for detecting median effects. Our work here aims to
side-step these problems by choosing the test statistics adaptively based on
the observed data and the quantile of interest, while still maintaining 
type-I error control.

Additionally, in the context of stratified experiments, 
the way to aggregating evidence across all strata is largely unexplored.
Despite that the method in \cite{SL22quantile} applies to general stratified experiments with strata of varying sizes, their numerical experiments focused mainly on strata of equal sizes, where it is natural and reasonable to weight all strata equally.
However, in practice,  
the strata in a stratified experiment often have varying sizes, under which weighting becomes a critical issue.
This issue can become more severe when we consider general rank statistics within each stratum. 

Throughout the paper, we will investigate both of the above issues regarding the
choice of rank statistics and weighting in stratified experiments. We will
first focus on the CRE, where we propose to adaptively choose the rank
statistics based on the observed data and the quantile of interest. We will then
extend the approach to the SRE, incorporating weighting schemes to combine rank
statistics from strata of varying sizes.

\section{Adaptive Randomization Tests in Completely Randomized Experiments using Multiple Rank Statistics}\label{sec:CRE}

In this section, we focus on the CRE, where $n_\treat$ units are randomly assigned to the treatment group and the remaining $n_\control$ units are assigned to the control group. 
Below we first review the existing approaches from \citet{CDLM21quantile} and \citet{ZL24quantile}, and then study how to adaptively choose or combine multiple rank statistics. 

\subsection{Randomization test for effect quantiles with a general distribution-free statistic}\label{sec:single_test}

Consider a general distribution-free statistic $t(\bs{z}, \bs{y})$ under the CRE, where the distribution of $t(\bs{Z}, \bs{y})$ does not depend on the outcome vector $\bs{y}$. Define a tail-probability function, $G^{(t)}(c) \equiv \Pr\{t(\bs{Z}, \bs{y})\ge c\}$, where $\bs{y}$ can be any constant vector in $\mathbb{R}^n$ due to the distribution-free property and $G^{(t)}$ uses the superscript $t$ to refer to the particular test statistic used. 
For any $0\le k \le n_\treat$ and $c\in \mathbb{R}$,
\citet{ZL24quantile} extended \citet{CDLM21quantile} and proposed the following $p$-value for testing the null hypothesis $H_{k, c}^\treat$ in \eqref{eq:H_kc_t} based on a distribution-free test statistic $t(\bs{z}, \bs{y})$: 
\begin{align}\label{eq:pval_t}
    p^{\treat(t)}_{k, c} & \equiv G^{(t)} \Big( \inf_{\bs{\delta} \in \mathcal{H}_{k,c}^\treat} t(\bs{Z}, \bs{Y} - \bs{Z}\circ \bs{\delta} ) \Big),
\end{align}
recalling that $\mathcal{H}_{k,c}^\treat$ is the set of possible values of individual treatment effect vector $\bs{\tau}$ under  $H_{k,c}^\treat$.
In \eqref{eq:pval_t}, $\bs{Y} - \bs{Z}\circ \bs{\delta}$ represents the imputed control potential outcome vector when the individual treatment effects are hypothesized to be $\bs{\delta}$. When the hypothesized vector of treatment effects is the same as the truth, $\bs{Y} - \bs{Z}\circ \bs{\delta} = \bs{Y} - \bs{Z}\circ \bs{\tau} =\bs{Y}(0)$.
Thus, if the null hypothesis $H_{k,c}^\treat$ is true, then we must have 
$\inf_{\bs{\delta} \in \mathcal{H}_{k,c}^\treat} t(\bs{Z}, \bs{Y} - \bs{Z}\circ \bs{\delta} ) \le t(\bs{Z}, \bs{Y} - \bs{Z} \circ \bs{\tau}) = t(\bs{Z}, \bs{Y}(0))$. 
Note that $G^{(t)}(\cdot)$ is also the tail probability of $t(\bs{Z}, \bs{Y}(0))$ and is a non-increasing function. 
Therefore, $p^{\treat(t)}_{k, c} \ge G^{(t)}(t(\bs{Z}, \bs{Y}(0)))$ is stochastically larger than or equal to $\text{Uniform}(0,1)$, the uniform distribution on $(0,1)$. These imply the validity of the $p$-value in \eqref{eq:pval_t}.

\citet{CDLM21quantile} considered the following class of rank sum statistics, which are distribution-free under the CRE with appropriate handling of ties discussed shortly:
\begin{align}\label{eq:rank_sum}
    t(\bs{z}, \bs{y}) = \sum_{i=1}^n z_i \phi( \rank_i(\bs{y}) ),
\end{align}
where $\bs{z}\in \{0,1\}^n$ is a general treatment assignment vector, $\bs{y}\in \mathbb{R}^n$ is a general outcome vector,
$\rank_i(\bs{y})$ is the rank of the $i$th coordinate of $\bs{y}$ among all coordinates of $\bs{y}$,
and $\phi$ is a general non-decreasing rank transformation. 
Moreover, we break ties randomly, or based on the indices assuming that the units have been randomly shuffled before analysis. 
Specifically, 
$\rank_i(\bs{y}) < \rank_j(\bs{y})$ if (i) $y_i < y_j$ or (ii) $y_i=y_j$ and $i<j$. 
This is critical for ensuring the distribution-free property of the rank sum statistic in \eqref{eq:rank_sum} under the CRE. 
When $\phi$ is the identity function, \eqref{eq:rank_sum} becomes the usual Wilcoxon rank sum statistic. 
When $\phi$ is
\begin{align}\label{eq:stephenson}
    \phi(r) = \binom{r-1}{\zeta-1} \text{ for } r\ge \zeta, \text{ and }  \phi(r) = 0 \text{ otherwise},
\end{align}
\eqref{eq:rank_sum} becomes the Stephenson rank sum statistic, where $\zeta$ is
a parameter that needs to be prespecified.\footnote{We discuss other
choices of rank transformations that are more appropriate for stratified experiments in Section 
\ref{sec:combine_sre}.} When $\zeta=2$, the Stephenson rank
is equivalent to the Wilcoxon rank under the CRE. 
When $\zeta>2$, the Stephenson rank amplifies the
contribution of high-ranking outcomes. The parameter $\zeta$ controls how
much we care about the extremes: larger $\zeta$ means we weight the very top
ranks far more heavily than the rest.
For example, 
when $\zeta = 30$, only units ranking 30th or
higher contribute, and the highest ranks contribute far more than ranks
just above the threshold. 
This emphasis on extremes can be beneficial for our tests of effect quantiles, where we deliberately shift some treated units with the largest outcomes to the smallest, as discussed below; in particular, the test can still be significant if there remain some treated units with outcomes larger than most controls.

For the rank sum statistic in \eqref{eq:rank_sum}, 
\citet{CDLM21quantile} and \citet{ZL24quantile} showed that  the $p$-value in \eqref{eq:pval_t} can be efficiently computed. 
Specifically, 
the statistic $t(\bs{Z}, \bs{Y} - \bs{Z}\circ \bs{\delta} )$ in \eqref{eq:pval_t} is minimized at $\bs{\delta} = \bs{\xi}_{k,c}^\treat = (\xi_{k,c,1}^\treat, \xi_{k,c,2}^\treat, \ldots, \xi_{k,c,n}^\treat)^\top$, where
\begin{align}\label{eq:xi_kc_t}
    \xi_{k,c,i}^\treat = 
    \begin{cases}
        \infty, & \text{if } Z_i = 1 \text{ and } 
        \sum_{j=1}^n Z_j \I \{ \rank_j(\bs{Y})  \ge \rank_i(\bs{Y}) \} \le n_{\treat} - k, \\
        c, & \text{otherwise}.
    \end{cases}
\end{align}
Note that the imputed control potential outcome $\bs{Y}-\bs{Z}\circ \bs{\delta}$ depends only on hypothesized effects (i.e., coordinates of $\bs{\delta}$) for treated units.
Intuitively, to achieve the infimum,
we let the $n_\treat - k$ treated units with the largest observed outcomes have infinitely large individual effects,
and the remaining treated units have individual effects $c$.
In this way, the imputed control potential outcomes for treated units are made as small as possible, thereby minimizing the rank sum statistic.

\subsection{Minimum $p$-value approach}\label{subsec:min_pval}

As demonstrated by simulation in \citet{CDLM21quantile} and in Section \ref{subsec:cre_sim} of the supplementary material, the power of the $p$-value in \eqref{eq:pval_t} for testing the null hypothesis on treatment effect quantiles 
depends crucially on the choice of the rank sum statistic $t(\cdot, \cdot)$ in \eqref{eq:rank_sum} or equivalently the rank transformation $\phi(\cdot)$.
Below we study how to adaptively choose the rank transformation 
based on the observed data.

Suppose we have $H$ rank sum statistics of form \eqref{eq:rank_sum} under consideration. 
Denote these $H$ rank sum statistics by $t^{(1)}(\cdot, \cdot), \ldots, t^{(H)}(\cdot, \cdot)$, and the corresponding monotone nondecreasing rank transformations by $\phi^{(1)}(\cdot), \ldots, \phi^{(H)}(\cdot)$. That is,
$t^{(h)}(\bs{z}, \bs{y}) = \sum_{i=1}^n z_i \phi^{(h)}(\rank_i(\bs{y}))$ for $1\le h \le H$.
For each $h$, 
let $G^{(h)}(c) = \Pr\{ t^{(h)}(\bs{Z}, \bs{y}) \ge c \}$ 
denote the corresponding tail probability function,  
where $\bs{y}$ can be any constant vector in $\mathbb{R}^n$ due to the distribution-free property. 
For any $0\le k \le n_\treat$,  $c\in \mathbb{R}$ and $1\le h\le H$, 
let $p^{\treat(h)}_{k, c} \equiv G^{(h)}( \inf_{\bs{\delta} \in \mathcal{H}_{k,c}^\treat} t^{(h)}(\bs{Z}, \bs{Y} - \bs{Z}\circ \bs{\delta} ))$ 
denote the $p$-value for testing 
the null hypothesis $H_{k, c}^\treat$ in \eqref{eq:H_kc_t}, defined analogously as in \eqref{eq:pval_t}.

From the discussion in Section \ref{sec:single_test}, each single $p^{\treat(h)}_{k, c}$ is a valid $p$-value for testing the null hypothesis $H_{k,c}^\treat$ in \eqref{eq:H_kc_t}. 
However, when the null hypothesis is false, different test statistics can have different power. 
Ideally, we want to use the minimum $p$-value $p^{\treat,\min}_{k,c}\equiv \min_{1\le h \le H} p^{\treat(h)}_{k,c}$, which would have the greatest power against the null hypotheses. 
However, in general, such a minimum $p$-value will inflate the type-I error rate and is thus invalid. 
The following theorem characterizes the distribution of the minimum $p$-value $p^{\treat,\min}_{k,c}$,  which allows us to control the type-I error rate.

\begin{theorem}\label{thm:cre_min_pval}
Under the CRE,
for any $0\le k\le n_\treat$, $c\in \mathbb{R}$ and $0\le \alpha \le 1$, 
$\Pr( p^{\treat, \min}_{k,c} \le \alpha \text{ and } H_{k,c}^\treat \text{ holds}) \le \overline{F}(\alpha)$. 
Here $\overline{F}(\cdot)$ is
the distribution function
of $\overline{t}(\bs{Z}, \bs{y})$, where $\overline{t}(\bs{Z}, \bs{y}) \equiv \min_{1\le h \le H} G^{(h)}(t^{(h)}(\bs{Z}, \bs{y}))$ is distribution-free, in the sense that 
$\overline{F}(\alpha) = \Pr\{ \overline{t}(\bs{Z}, \bs{y}) \le \alpha \}$ 
does not depend on the value of  $\bs{y} \in \mathbb{R}^n$.
\end{theorem}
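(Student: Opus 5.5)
The argument has two parts: a distribution-freeness claim for $\overline{t}$, and then a domination argument that mirrors the single-statistic validity reasoning after \eqref{eq:pval_t}.

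\textbf{Distribution-freeness of $\overline{t}$.} Under the CRE with the index-based tie-breaking rule, the rank vector $(\rank_1(\bs{y}),\ldots,\rank_n(\bs{y}))$ is a fixed permutation $\pi_{\bs{y}}$ of $\{1,\ldots,n\}$. Every $t^{(h)}(\bs{Z},\bs{y})$ is a function of the set of ranks occupied by treated units, $\{\rank_i(\bs{y}): Z_i=1\}$. Since $\bs{Z}$ is uniformly distributed over assignments with $n_\treat$ treated units, this set is uniformly distributed over all $n_\treat$-subsets of $\{1,\ldots,n\}$, whatever $\pi_{\bs{y}}$ is. Consequently the joint law of the whole vector $(t^{(1)}(\bs{Z},\bs{y}),\ldots,t^{(H)}(\bs{Z},\bs{y}))$ does not depend on $\bs{y}$. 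The functions $G^{(h)}$ are deterministic, so $\overline{t}(\bs{Z},\bs{y})=\min_h G^{(h)}(t^{(h)}(\bs{Z},\bs{y}))$ is a fixed function of that vector. Hence its distribution function $\overline{F}$ is free of $\bs{y}$.

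\textbf{Domination of the minimum $p$-value.} On the event that $H^\treat_{k,c}$ holds, we have $\bs{\tau}\in\mathcal{H}^\treat_{k,c}$. Therefore, for each $h$,
\[
\inf_{\bs{\delta}\in\mathcal{H}^\treat_{k,c}} t^{(h)}(\bs{Z},\bs{Y}-\bs{Z}\circ\bs{\delta}) \le t^{(h)}(\bs{Z},\bs{Y}(0)).
\]
Because each $G^{(h)}$ is non-increasing, this gives $p^{\treat(h)}_{k,c}\ge G^{(h)}(t^{(h)}(\bs{Z},\bs{Y}(0)))$. Taking the minimum over $h$ yields $p^{\treat,\min}_{k,c}\ge \overline{t}(\bs{Z},\bs{Y}(0))$ on that event. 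It follows that
\[
\Pr(p^{\treat,\min}_{k,c}\le\alpha \text{ and } H^\treat_{k,c}\text{ holds}) \le \Pr\{\overline{t}(\bs{Z},\bs{Y}(0))\le\alpha\}.
\]
Here $\bs{Y}(0)$ is a fixed constant vector, so by the first part the right-hand side equals $\overline{F}(\alpha)$.

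\textbf{Main subtlety.} The delicate point is that the null hypothesis is random: it depends on $\bs{Z}$ through $\mathcal{H}^\treat_{k,c}$. The argument above handles this because the comparison with the true $\bs{\tau}$ holds pointwise on the event where the null holds, and we then simply enlarge to the unconditional event $\{\overline{t}(\bs{Z},\bs{Y}(0))\le\alpha\}$. There is no need to condition on the null event. The remaining care is in the first part: we must confirm that the tie-breaking rule makes the ranks a genuine permutation for every $\bs{y}$, including vectors with ties or infinite imputed values. This is immediate from the definition of $\rank_i$.
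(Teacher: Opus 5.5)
Your proposal is correct and follows essentially the same route as the paper. The paper first shows that $\overline{t}$ is distribution-free (Lemma~\ref{lem:dist_free_comb}, using the permutation invariance of $\bs{Z}$ under the CRE, which is equivalent to your observation that the set of treated ranks is uniform over $n_\treat$-subsets). It then uses exactly your pointwise bound $p^{\treat,\min}_{k,c}\ge\overline{t}(\bs{Z},\bs{Y}(0))$ on the null event, evaluated at the fixed vector $\bs{Y}(0)$.
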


Theorem \ref{thm:cre_min_pval} provides an upper bound on the type-I error rate when
rejecting $H_{k,c}^\treat$ using the minimum $p$-value such that
$p^{\treat,\min}_{k,c} \leq \alpha$. Based on this, we can adjust the cutoff
$\alpha$ such that $\overline{F}(\alpha)$ is bounded by the desired
significance level. 
Alternatively, we can  apply $\overline{F}$ directly to the observed minimum
$p$-value to obtain a valid $p$-value, as stated in the following corollary.

\begin{corollary}\label{cor:cre_min_pval}
Under the CRE, for any $0\le k\le n_\treat$ and $c\in \mathbb{R}$, $\overline{F}( p^{\treat,\min}_{k,c})$ is a valid $p$-value for testing the null hypothesis $H_{k,c}^\treat$ in \eqref{eq:H_kc_t},
    where $\overline{F}(\cdot)$ is defined as in Theorem \ref{thm:cre_min_pval}.
\end{corollary}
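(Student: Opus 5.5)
The statement immediately above is Corollary \ref{cor:cre_min_pval}. We prove it from Theorem \ref{thm:cre_min_pval} by the usual argument that the distribution function of a random variable, evaluated at that variable, is stochastically at least uniform. Let $T \equiv \overline{t}(\bs{Z}, \bs{Y}(0))$. By Theorem \ref{thm:cre_min_pval}, its distribution function $\overline{F}$ does not depend on $\bs{Y}(0)$.

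\textbf{Step 1: pointwise domination on the null event.} On the event that $H_{k,c}^\treat$ holds, $\bs{\tau} \in \mathcal{H}_{k,c}^\treat$. Hence, for every $h$,
\[
\inf_{\bs{\delta} \in \mathcal{H}_{k,c}^\treat} t^{(h)}(\bs{Z}, \bs{Y} - \bs{Z}\circ \bs{\delta}) \le t^{(h)}(\bs{Z}, \bs{Y}(0)).
\]
Because $G^{(h)}$ is non-increasing, this gives $p^{\treat(h)}_{k,c} \ge G^{(h)}(t^{(h)}(\bs{Z}, \bs{Y}(0)))$. Taking the minimum over $h$ yields $p^{\treat,\min}_{k,c} \ge T$. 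Since $\overline{F}$ is non-decreasing, $\overline{F}(p^{\treat,\min}_{k,c}) \ge \overline{F}(T)$ on this event. It therefore suffices to show $\Pr\{\overline{F}(T) \le \alpha\} \le \alpha$ for every $\alpha \in (0,1)$.

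\textbf{Step 2: $\overline{F}(T)$ is stochastically at least uniform.} This is the step that needs care, because $T$ is discrete: it takes finitely many values, as $\bs{Z}$ ranges over a finite set. Fix $\alpha$. Let $a^* = \sup\{a : \overline{F}(a) \le \alpha\}$.

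\emph{Case (i): $\overline{F}(a^*) \le \alpha$.} By right-continuity the set $\{a : \overline{F}(a) \le \alpha\}$ is $(-\infty, a^*]$. Then $\Pr\{\overline{F}(T) \le \alpha\} = \Pr(T \le a^*) = \overline{F}(a^*) \le \alpha$.

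\emph{Case (ii): $\overline{F}(a^*) > \alpha$.} The set is $(-\infty, a^*)$, and $\Pr(T < a^*) = \lim_{a \uparrow a^*} \overline{F}(a) \le \alpha$.

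\textbf{Step 3: conclusion.} Combining the two steps,
\[
\Pr\{\overline{F}(p^{\treat,\min}_{k,c}) \le \alpha \text{ and } H_{k,c}^\treat \text{ holds}\} \le \Pr\{\overline{F}(T) \le \alpha\} \le \alpha.
\]
This is exactly validity for the random null hypothesis, as defined in Section \ref{sec:framework}.

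The main subtlety is not in Step 2 but in Step 1. We apply $\overline{F}$ to $T$, which is computed at the true but unobserved $\bs{Y}(0)$. This is legitimate only because $\overline{t}$ is distribution-free, so the law of $T$ is exactly $\overline{F}$ even though $\bs{Y}(0)$ is unknown. That property was established in Theorem \ref{thm:cre_min_pval} and is simply invoked here.
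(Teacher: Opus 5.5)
Your proof is correct and is essentially the paper's argument: it uses the same two-case analysis of $\sup\{a:\overline{F}(a)\le\alpha\}$ to handle the discreteness of $\overline{F}$. The only difference is the order of steps. You push $\overline{F}$ through the pointwise bound $p^{\treat,\min}_{k,c}\ge \overline{t}(\bs{Z},\bs{Y}(0))$ and run the case analysis on $T$ itself, whereas the paper runs it directly on $p^{\treat,\min}_{k,c}$ using only the probability bound in Theorem~\ref{thm:cre_min_pval}, so the paper never reopens that theorem's proof. (A small point on your Case (i): monotonicity together with $\overline{F}(a^*)\le\alpha$, not right-continuity, is what makes the set equal to $(-\infty,a^*]$.)
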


The calibration absorbs the multiplicity, and is sharper than the naive Bonferroni correction.

\subsection{Combining multiple rank sum  statistics}\label{subsec:combine_test}

In practice, another popular way to combining multiple tests is to construct a new test statistic from the multiple test statistics under consideration. 
Indeed, 
for testing the null hypothesis $H_{k,c}^\treat$ in \eqref{eq:H_kc_t}, 
the valid $p$-value from Corollary \ref{cor:cre_min_pval} can be equivalently written as a $p$-value of form \eqref{eq:pval_t} with a combined distribution-free rank statistic defined in the theorem below.

\begin{theorem}\label{thm:cre_min_pval_combined}
Under the CRE, 
for any $1\le k\le n_\treat$ and $c\in \mathbb{R}$, 
the valid $p$-value $\overline{F}(p^{\treat, \min}_{k,c})$ in Corollary \ref{cor:cre_min_pval} for testing the null hypothesis $H_{k,c}^\treat$ in \eqref{eq:H_kc_t} is equivalent to the $p$-value $p_{k,c}^{\treat(t)}$ in \eqref{eq:pval_t} with 
$t(\bs{z}, \bs{y}) = - \overline{t}(\bs{z}, \bs{y})$, 
where $\overline{t}(\cdot, \cdot)$ is defined as in Theorem \ref{thm:cre_min_pval} and is distribution-free under the CRE. 
Specifically, 
\begin{align}\label{eq:pval_combined_minpval}
    \overline{F}(p^{\treat, \min}_{k,c})
    = p_{k,c}^{\treat(-\overline{t})} \equiv 
    G^{(-\overline{t})} \Big( \inf_{\bs{\delta} \in \mathcal{H}_{k,c}^\treat} -\overline{t}(\bs{Z}, \bs{Y} - \bs{Z}\circ \bs{\delta} ) \Big)
    = 
    G^{(-\overline{t})} \big( -\overline{t}(\bs{Z}, \bs{Y} - \bs{Z} \circ \bs{\xi}_{k,c}^\treat ) \big),
\end{align}
where $G^{(-\overline{t})}(c) \equiv \Pr\{ -\overline{t}(\bs{Z}, \bs{y} ) \ge c \} $ 
with any constant $\bs{y} \in \mathbb{R}^n$, 
and $\bs{\xi}_{k,c}^\treat$ is defined in \eqref{eq:xi_kc_t}.
\end{theorem}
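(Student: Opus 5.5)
The plan is to prove the three displayed equalities in order. First I establish the distribution-free property of $\overline{t}$. Then I show the infimum in $p^{\treat(h)}_{k,c}$ is attained at the common point $\bs{\xi}_{k,c}^\treat$ for every $h$. Finally I rewrite $\overline{F}(p^{\treat,\min}_{k,c})$ as a tail probability of $-\overline{t}$.

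\emph{Distribution-freeness.} With the index-based tie-breaking, the rank vector $(\rank_1(\bs{y}),\ldots,\rank_n(\bs{y}))$ is a fixed permutation $\pi$ of $\{1,\ldots,n\}$. Each $t^{(h)}(\bs{Z},\bs{y})=\sum_i Z_i\phi^{(h)}(\pi_i)$, so $\overline{t}(\bs{Z},\bs{y})$ is a fixed function of $(Z_{\pi^{-1}(1)},\ldots,Z_{\pi^{-1}(n)})$. Under the CRE, $\bs{Z}$ is exchangeable, so this permuted vector has the same law as $\bs{Z}$. Hence the joint law of $(t^{(1)},\ldots,t^{(H)})$, and therefore the laws of $\overline{t}$ and $-\overline{t}$, do not depend on $\bs{y}$. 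This makes $\overline{F}$ and $G^{(-\overline{t})}$ well defined.

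\emph{Common minimizer.} From Section \ref{sec:single_test}, for each $h$,
\[
\inf_{\bs{\delta}\in\mathcal{H}_{k,c}^\treat} t^{(h)}(\bs{Z},\bs{Y}-\bs{Z}\circ\bs{\delta}) = t^{(h)}(\bs{Z},\bs{Y}-\bs{Z}\circ\bs{\xi}_{k,c}^\treat).
\]
The key observation is that $\bs{\xi}_{k,c}^\treat$ does not depend on $h$. I would justify this by the structural fact behind that result. For any $\bs{\delta}\in\mathcal{H}_{k,c}^\treat$, the set of treated units ranked in the top positions after imputation at $\bs{\xi}_{k,c}^\treat$ is dominated coordinatewise. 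Precisely, the sorted vector of treated ranks under $\bs{Y}-\bs{Z}\circ\bs{\xi}_{k,c}^\treat$ is elementwise $\le$ the sorted vector of treated ranks under $\bs{Y}-\bs{Z}\circ\bs{\delta}$. This domination holds uniformly over all nondecreasing $\phi$; equivalently, the cited result holds for every nondecreasing $\phi$, so applying it to each $\phi^{(h)}$ gives the same minimizer.

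Since each $G^{(h)}$ is nonincreasing, $p^{\treat(h)}_{k,c}=G^{(h)}(t^{(h)}(\bs{Z},\bs{Y}-\bs{Z}\circ\bs{\xi}_{k,c}^\treat))$ for every $h$. It follows that
\[
p^{\treat,\min}_{k,c}=\overline{t}(\bs{Z},\bs{Y}-\bs{Z}\circ\bs{\xi}_{k,c}^\treat).
\]

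\emph{Infimum and tail identity.} For every $\bs{\delta}\in\mathcal{H}_{k,c}^\treat$, the same monotonicity gives $G^{(h)}(t^{(h)}(\bs{Z},\bs{Y}-\bs{Z}\circ\bs{\delta}))\le G^{(h)}(t^{(h)}(\bs{Z},\bs{Y}-\bs{Z}\circ\bs{\xi}_{k,c}^\treat))$ for each $h$. Taking the minimum over $h$ yields $\overline{t}(\bs{Z},\bs{Y}-\bs{Z}\circ\bs{\delta})\le\overline{t}(\bs{Z},\bs{Y}-\bs{Z}\circ\bs{\xi}_{k,c}^\treat)$. So $-\overline{t}$ attains its infimum over $\mathcal{H}_{k,c}^\treat$ at $\bs{\xi}_{k,c}^\treat$, because $\bs{\xi}_{k,c}^\treat\in\mathcal{H}_{k,c}^\treat$ (entries $\infty$ understood as in the paper). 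This gives the second equality in \eqref{eq:pval_combined_minpval}. For the first equality, for any $a$,
\[
G^{(-\overline{t})}(-a)=\Pr\{-\overline{t}(\bs{Z},\bs{y})\ge -a\}=\Pr\{\overline{t}(\bs{Z},\bs{y})\le a\}=\overline{F}(a).
\]
Taking $a=p^{\treat,\min}_{k,c}$ and using the identity above completes the proof.

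The main obstacle is the common-minimizer step. If the earlier result is only quoted for a single $\phi$, I will re-derive the dominance claim directly. Moving treated units with effect $\infty$ to imputed value $-\infty$ places them at the lowest ranks. Setting the remaining effects to $c$, the smallest value allowed for units not exceeding $c$, minimizes their imputed outcomes. Consequently, each treated unit's rank is pointwise minimal after a suitable matching, uniformly in $\phi$.
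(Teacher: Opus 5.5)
Your proposal is correct and follows essentially the same route as the paper. Like the paper, you argue distribution-freeness from the exchangeability of the CRE assignment (the paper's Lemma 1). You then use the common minimizer $\bs{\xi}_{k,c}^\treat$ of every $t^{(h)}$ over $\mathcal{H}_{k,c}^\treat$ to obtain $p^{\treat,\min}_{k,c}=\overline{t}(\bs{Z},\bs{Y}-\bs{Z}\circ\bs{\xi}_{k,c}^\treat)$, and the monotonicity of $G^{(h)}$ to show $\overline{t}(\bs{Z},\bs{Y}-\bs{Z}\circ\bs{\delta})$ is maximized over $\mathcal{H}_{k,c}^\treat$ at $\bs{\xi}_{k,c}^\treat$, before finishing with the identity $\overline{F}(a)=G^{(-\overline{t})}(-a)$.
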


Theorem~\ref{thm:cre_min_pval_combined} shows that the calibrated minimum $p$-value is simply the standard $p$-value for a new combined test statistic, $-\overline{t}$. Thus, rather than running multiple tests and correcting, we can perform a single test whose statistic captures whichever pattern---Wilcoxon-like or Stephenson-like---is strongest in the data. The combined statistic remains distribution-free under the CRE, so validity follows directly.

\begin{remark}
An important reason for the equivalence in Theorem \ref{thm:cre_min_pval_combined} is that the optimization in the $p$-value $p^{\treat(h)}_{k, c}$ defined as in \eqref{eq:pval_t} for each individual rank sum statistic $t^{(h)}(\cdot, \cdot)$ is attained at the same hypothesized individual treatment effect vector $\bs{\xi}_{k,c}^\treat$.
Note that this may no longer be true for stratified experiments, an issue we will study in Section \ref{sec:SRE}.
\end{remark}

Below we give more intuition for the combined rank statistic $-\overline{t}(\bs{z}, \bs{y})$. 
By definition, it has the following equivalent forms, as well as approximations when the sample size is large: 
\begin{align}\label{eq:combined_stat_approx}
  - \overline{t}(\bs{z}, \bs{y})
  & = - \min_{1\le h \le H} G^{(h)}(t^{(h)}(\bs{z}, \bs{y}))
  = 
  \max_{1\le h\le H} 
  - G^{(h)}(t^{(h)}(\bs{z}, \bs{y}))
  \nonumber
  \\
  & \approx 
  \max_{1\le h\le H} 
  - \overline{\Phi}\left(\frac{t^{(h)}(\bs{z}, \bs{y})-\mu^{(h)}}{\sigma^{(h)}}\right)
  = 
  - \overline{\Phi}\left(
  \max_{1\le h\le H} \frac{t^{(h)}(\bs{z}, \bs{y})-\mu^{(h)}}{\sigma^{(h)}} \right),
\end{align}
where $\overline{\Phi}(\cdot)$ is a tail probability function of the standard Gaussian distribution,
and $\mu^{(h)}$ and $\sigma^{(h)}$ are the mean and standard deviation of $t^{(h)}(\bs{Z}, \bs{y})$ under the CRE, for all $1\le h\le H$.
In \eqref{eq:combined_stat_approx},
we use the Gaussian approximation for the tail probability function $G^{(h)}(\cdot)$, whose validity can be justified by the finite population central limit theorem \citep{hajek1960limiting, LD2017, shi2024some}.
Note that the $p$-value in \eqref{eq:pval_t} is invariant to monotone strictly increasing transformations of the test statistic. 
Therefore, the valid $p$-value in Corollary \ref{cor:cre_min_pval} and Theorem \ref{thm:cre_min_pval_combined} is 
approximately 
equivalent to the $p$-value in \eqref{eq:pval_t} using the following maximum of the standardized rank sum statistics, which is also distribution-free under the CRE:
\begin{align}\label{eq:combined_stat_max}
    \tilde{t}(\bs{z}, \bs{y}) = \max_{1\le h\le H} \frac{t^{(h)}(\bs{z}, \bs{y})-\mu^{(h)}}{\sigma^{(h)}}.
\end{align}
Furthermore, the corresponding $p$-value in \eqref{eq:pval_t}
can also be efficiently computed.

\begin{theorem}\label{thm:cre_min_pval_combined_gaussian}
    Under the CRE, 
    for any $0\le k\le n_\treat$ and $c\in \mathbb{R}$, 
    the $p$-value $p_{k,c}^{\treat(\tilde{t})}$ defined as in \eqref{eq:pval_t} with $\tilde{t}(\cdot, \cdot)$ in \eqref{eq:combined_stat_max} is valid for testing the null $H_{k,c}^\treat$ in \eqref{eq:H_kc_t}, and it has the following equivalent forms:    \begin{align}\label{eq:pval_combined_minpval_gaussian}
        p_{k,c}^{\treat(\tilde{t})} \equiv G^{(\tilde{t})} \Big( \inf_{\bs{\delta} \in \mathcal{H}_{k,c}^\treat} \tilde{t}(\bs{Z}, \bs{Y} - \bs{Z}\circ \bs{\delta} ) \Big)
        = 
        G^{(\tilde{t})} \big( \tilde{t}(\bs{Z}, \bs{Y} - \bs{Z}\circ \bs{\xi}_{k,c}^\treat ) \big),
    \end{align}
    where $G^{(\tilde{t})}(c) \equiv \Pr\{\tilde{t}(\bs{Z}, \bs{y} ) \ge c \} $ 
    with $\bs{y}$ being any constant in $\mathbb{R}^n$, 
    and $\bs{\xi}_{k,c}^\treat$ is defined in \eqref{eq:xi_kc_t}.
\end{theorem}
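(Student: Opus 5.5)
The proof has three parts: distribution-freeness of $\tilde{t}$, validity of $p_{k,c}^{\treat(\tilde{t})}$, and the identification of the infimum at $\bs{\xi}_{k,c}^\treat$.

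\emph{Distribution-freeness.} With the index-based tie-breaking, the rank vector $(\rank_1(\bs{y}),\ldots,\rank_n(\bs{y}))$ is a permutation $\pi_{\bs{y}}$ of $\{1,\ldots,n\}$. Each $t^{(h)}(\bs{Z},\bs{y})=\sum_i Z_i\phi^{(h)}(\rank_i(\bs{y}))$ is a function of $\bs{Z}$ and $\pi_{\bs{y}}$ only. Under the CRE, $\bs{Z}$ is exchangeable, so the vector $(Z_{\pi_{\bs{y}}^{-1}(1)},\ldots,Z_{\pi_{\bs{y}}^{-1}(n)})$ has the same law as $\bs{Z}$. Hence the joint law of $(t^{(1)},\ldots,t^{(H)})(\bs{Z},\bs{y})$ does not depend on $\bs{y}$. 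In particular $\mu^{(h)}$ and $\sigma^{(h)}$ are constants, and so is the law of $\tilde{t}(\bs{Z},\bs{y})$. We assume each $\sigma^{(h)}>0$, i.e., no $\phi^{(h)}$ is constant.

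\emph{Validity.} This follows the argument of Section~\ref{sec:single_test} verbatim. Under $H_{k,c}^\treat$ we have $\bs{\tau}\in\mathcal{H}_{k,c}^\treat$, so the infimum is at most $\tilde{t}(\bs{Z},\bs{Y}(0))$. Since $G^{(\tilde{t})}$ is non-increasing, the $p$-value is at least $G^{(\tilde{t})}(\tilde{t}(\bs{Z},\bs{Y}(0)))$, where $\bs{Y}(0)$ is a fixed vector. The right-hand side is the tail-probability transform of the statistic's own distribution, so it is stochastically no smaller than Uniform$(0,1)$. This gives $\Pr(p\le\alpha \text{ and } H \text{ holds})\le\alpha$.

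\emph{Equivalent form.} This is the main step. I will show that $\bs{\xi}_{k,c}^\treat$ simultaneously minimizes every $t^{(h)}(\bs{Z},\bs{Y}-\bs{Z}\circ\bs{\delta})$ over $\bs{\delta}\in\mathcal{H}_{k,c}^\treat$. The centering and scaling $x\mapsto(x-\mu^{(h)})/\sigma^{(h)}$ is strictly increasing, and the max of coordinatewise nondecreasing functions is nondecreasing. Therefore, for every $\bs{\delta}\in\mathcal{H}_{k,c}^\treat$,
\[
\tilde{t}(\bs{Z},\bs{Y}-\bs{Z}\circ\bs{\delta})=\max_h \frac{t^{(h)}(\bs{Z},\bs{Y}-\bs{Z}\circ\bs{\delta})-\mu^{(h)}}{\sigma^{(h)}}\ \ge\ \max_h \frac{t^{(h)}(\bs{Z},\bs{Y}-\bs{Z}\circ\bs{\xi}^\treat_{k,c})-\mu^{(h)}}{\sigma^{(h)}}.
\]
Since $\bs{\xi}^\treat_{k,c}\in\mathcal{H}_{k,c}^\treat$, the infimum is attained there; it is a minimum because the value is finite.

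The remaining obstacle is that the earlier sections only quote the minimizer result from \citet{CDLM21quantile} and \citet{ZL24quantile}. I need it to hold for an arbitrary nondecreasing $\phi$ with the $\bs{\delta}$-independent minimizer $\bs{\xi}_{k,c}^\treat$. That result is stated for the general class \eqref{eq:rank_sum}, so I would invoke it. As a sanity check I would sketch the reason. Fix $\bs{\delta}\in\mathcal{H}_{k,c}^\treat$. At least $k$ treated units have $\delta_i\le c$, so their imputed controls are at least $Y_i-c$. Moving to $\bs{\xi}$ sends the top $n_\treat-k$ treated units to $-\infty$ and sets the others to exactly $Y_i-c$. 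Control outcomes are unchanged. A coupling or sorting argument then shows that the sorted ranks of treated units under $\bs{\xi}$ are entrywise no larger than those under $\bs{\delta}$, with tie-breaking by index preserved. Because each $\phi^{(h)}$ is nondecreasing, each $t^{(h)}$ is therefore no larger under $\bs{\xi}$.

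If $\bs{\xi}$ involves infinite coordinates, I will interpret $\infty$ as a sufficiently large finite value, since ranks are unaffected. Applying $G^{(\tilde{t})}$ to both sides then gives \eqref{eq:pval_combined_minpval_gaussian}.
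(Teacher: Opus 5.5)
Your proposal is correct and follows the paper's proof essentially step for step. It obtains distribution-freeness from the rank-permutation/exchangeability argument of Lemma~\ref{lem:dist_free_comb}, validity from the tail-probability bound (the paper cites Lemma A4 of \citet{CDLM21quantile} for $\Pr\{G^{(\tilde{t})}(\tilde{t}(\bs{Z},\bs{Y}(0)))\le\alpha\}\le\alpha$), and the equivalent form from the fact that every component rank sum statistic is minimized at the same $\bs{\xi}_{k,c}^\treat$. Your two additions only make implicit steps of the paper precise: you spell out that the standardized maximum is coordinatewise nondecreasing, and you read $\infty$ as a large finite value so that $\bs{\xi}_{k,c}^\treat$ genuinely lies in $\mathcal{H}_{k,c}^\treat$.
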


We want to emphasize that the Gaussian approximation in
\eqref{eq:combined_stat_approx} does not affect the finite-sample validity of
the $p$-value in Theorem \ref{thm:cre_min_pval_combined_gaussian}. We use it as
a heuristic to motivate another way of combining multiple rank statistics.
Furthermore, as discussed later in Section \ref{sec:SRE}, such a Gaussian
approximation for combining multiple rank statistics can be computationally
useful for stratified experiments. 

\subsection{Simultaneous prediction and confidence intervals for treatment effect quantiles}

Following \citet{ZL24quantile}, we can invert the test for the null hypothesis in \eqref{eq:H_kc_t} over all $c\in \mathbb{R}^n$ to construct prediction intervals for treatment effect quantiles among treated units. 
These prediction intervals are simultaneously valid across all effect quantiles among treated units. 
Furthermore, we can construct simultaneous prediction intervals for treatment effect quantiles among control units, and combine prediction intervals for treated and control units to obtain simultaneous confidence intervals for treatment effect quantiles among all units.
These simultaneous prediction and confidence intervals for quantiles of individual effects among treated, control, or all units essentially provide simultaneous confidence bands for the entire quantile or distribution functions of individual effects for these groups. 
We summarize the results below for completeness; they are direct corollaries from \citet{ZL24quantile}.

\begin{corollary}\label{cor:interval_cre}
Under the CRE, consider any distribution-free statistic $t(\cdot, \cdot)$ and the $p$-value $p^{\treat(t)}_{k, c}$ in \eqref{eq:pval_t}. For example, $t(\cdot, \cdot)$ can be either $-\overline{t}(\cdot, \cdot)$ or $\tilde{t}(\cdot, \cdot)$ in Theorems \ref{thm:cre_min_pval_combined} or \ref{thm:cre_min_pval_combined_gaussian}, with the corresponding $p$-value $p^{\treat(t)}_{k, c}$ being that shown in \eqref{eq:pval_combined_minpval} or \eqref{eq:pval_combined_minpval_gaussian}. 
    \begin{enumerate}[label=(\roman*)]
        \item For any $1\le k \le n_\treat$ and $\alpha \in (0,1)$, $\mathcal{I}_{\treat(k)}^\alpha := \{c : p^{\treat(t)}_{k, c} > \alpha, c \in \mathbb{R}\}$ is a $1 - \alpha$ prediction set for $\tau_{\treat(k)}$, and it has the form $(\underline{c}, \infty)$ or $[\underline{c}, \infty]$, with $\underline{c} = \inf\{c: p^{\treat(t)}_{k, c} > \alpha \}$. 
        Moreover, these prediction intervals are simultaneously valid across all $1\le k \le n_\treat$, i.e., 
        $\Pr(\tau_{\treat(k)}\in \mathcal{I}_{\treat(k)}^\alpha \text{ for all } 1\le k \le n_\treat)\ge 1-\alpha$.

        \item By switching the treatment labels and changing the outcome signs, we can analogously construct simultaneous prediction intervals $\mathcal{I}_{\control(k)}^\alpha$s for $\tau_{\control(k)}$s among control units.

        \item 
        Pool the prediction intervals $\mathcal{I}_{\treat(k)}^\alpha$s and $\mathcal{I}_{\control(k)}^\alpha$s for effect quantiles among treated and control units,
        and sort them so that they are nested, i.e., $\mathcal{I}_{(1)}^\alpha \supset \mathcal{I}_{(2)}^\alpha \supset \ldots \supset \mathcal{I}_{(n)}^\alpha$. 
        Then $\mathcal{I}_{(k)}^\alpha$s are $1-2\alpha$ simultaneous confidence intervals for $\tau_{(k)}$s among all units, i.e., $\Pr\{ \tau_{(k)}\in \mathcal{I}_{(k)}^\alpha \text{ for all } 1\le k \le n \} \ge 1-2\alpha$. 

    \end{enumerate}
\end{corollary}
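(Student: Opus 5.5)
Corollary~\ref{cor:interval_cre} is a general statement about any distribution-free $t$, so the plan is to invoke the corresponding results of \citet{ZL24quantile} after checking that their hypotheses hold. The one ingredient needed is that $p^{\treat(t)}_{k,c}$ is a valid $p$-value for the random null $H^\treat_{k,c}$ in the sense $\Pr(p\le\alpha \text{ and } H \text{ holds})\le\alpha$. For $-\overline{t}$ and $\tilde{t}$ this validity comes from Theorems~\ref{thm:cre_min_pval_combined} and \ref{thm:cre_min_pval_combined_gaussian}.

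For part (i), I would first establish monotonicity in $c$. As $c$ increases, the set $\mathcal{H}^\treat_{k,c}$ grows, so the infimum in \eqref{eq:pval_t} can only decrease. Since $G^{(t)}$ is nonincreasing, $p^{\treat(t)}_{k,c}$ is nondecreasing in $c$. Hence $\{c: p^{\treat(t)}_{k,c}>\alpha\}$ is an upper ray $(\underline c,\infty)$ or $[\underline c,\infty)$ (with $\infty$ allowed). Monotonicity in $k$ goes the other way: $\mathcal{H}^\treat_{k,c}$ shrinks in $k$, so the $p$-value is nonincreasing in $k$.

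The key step for simultaneity is to use a single event that controls all $k$ at once. Let $E=\{G^{(t)}(t(\bs Z,\bs Y(0)))\le\alpha\}$. Because $t(\bs Z,\bs Y(0))$ has tail function $G^{(t)}$, we get $\Pr(E)\le\alpha$. Now suppose $\tau_{\treat(k)}\notin\mathcal{I}^\alpha_{\treat(k)}$ for some $k$. Then $p^{\treat(t)}_{k,\tau_{\treat(k)}}\le\alpha$ while $H^\treat_{k,\tau_{\treat(k)}}$ holds, i.e.\ $\bs\tau\in\mathcal{H}^\treat_{k,\tau_{\treat(k)}}$. Consequently the infimum is at most $t(\bs Z,\bs Y(0))$, so $G^{(t)}(t(\bs Z,\bs Y(0)))\le p\le\alpha$, which is exactly $E$. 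The union of all failures is therefore contained in $E$, and this gives coverage $1-\alpha$.

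Part (ii) follows by symmetry: relabel the arms and negate the outcomes. The design is still a CRE, the rank statistics keep their distribution-free property, and the argument of part (i) applies verbatim.

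For part (iii), intersect the two simultaneous events, which by a union bound hold jointly with probability at least $1-2\alpha$. On that event, I would show $\tau_{(k)}\in\mathcal{I}^\alpha_{(k)}$ by counting. Every interval is an upper ray, so it is determined by its lower endpoint. Write $\ell_{\treat(j)}$ and $\ell_{\control(j)}$ for the lower endpoints, each nondecreasing in $j$, and sort the pooled lower endpoints as $\ell_{(1)}\le\dots\le\ell_{(n)}$. Since $\tau_{\treat(j)}\ge\ell_{\treat(j)}$ and $\tau_{\control(j)}\ge\ell_{\control(j)}$, at least $n-k+1$ of the $\tau_i$ are at least $\ell_{(k)}$. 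This holds because the $n-k+1$ largest lower endpoints are each dominated by a distinct unit's effect, with units matched by their group ranks. It follows that $\tau_{(k)}\ge\ell_{(k)}$.

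The main obstacle I anticipate is this matching step, together with endpoint (open/closed) bookkeeping. I would handle it with an order-statistics domination lemma: if the $a_i$ pointwise dominate the $b_i$ after both are sorted within groups, then the pooled order statistics also dominate.
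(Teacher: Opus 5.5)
Your proposal is correct and is essentially the argument the paper intends. The paper proves this corollary only by citing the logic of Theorems 1 and 2 of \citet{ZL24quantile}, and you reconstruct that logic: containing every coverage failure, over all $k$, in the single event $\{G^{(t)}(t(\bs{Z},\bs{Y}(0)))\le\alpha\}$ for (i), relabeling with a sign change for (ii), and a Bonferroni union bound plus pooled counting for (iii).

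For the open/closed bookkeeping you flag, argue with set inclusion rather than lower endpoints. On the joint coverage event, each nested interval $\mathcal{I}^\alpha_{(m)}\subseteq\mathcal{I}^\alpha_{(k)}$ with $m\ge k$ contains the effect of a distinct matched unit. So the upper ray $\mathcal{I}^\alpha_{(k)}$ contains at least $n-k+1$ of the $\tau_i$, and hence must contain $\tau_{(k)}$.
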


In other words, by inverting the test across all values of $c$ for each $k$, we
obtain prediction intervals for every quantile of the treatment effect
distribution among treated units---and these intervals are simultaneously valid
across all quantiles. The same procedure, with relabeled treatments and changed outcome sign, gives
simultaneous prediction intervals for control units. Pooling and sorting the
intervals for both groups yields simultaneous confidence intervals for the quantiles among
all $n$ units. 
We illustrate these intervals using the teacher training experiment in Section \ref{sec:app}.

\section{Weighting and Combination of Stratified Rank Statistics in Stratified Randomized Experiments}\label{sec:SRE}

We now consider stratified experiments, dividing the $n$ units into $S$ strata, with $n_s$ units in each stratum $s$ for $1\le s\le S$. 
For descriptive convenience, 
we will index each unit by $si$, which denotes the $i$th unit in the $s$th stratum, 
and use analogously $Y_{si}(1)$, $Y_{si}(0)$, $\tau_{si}$, $Z_{si}$ and $Y_{si}$ to denote its potential outcomes, individual treatment effect, treatment assignment and observed outcome, respectively. We also introduce $\bs{Z}_s = (Z_{s1}, Z_{s2} \ldots , Z_{sn_s})$ and $\bs{Y}_s = (Y_{s1}, Y_{s2} \ldots , Y_{sn_s})$ to denote the vectors of treatment assignments and observed outcomes for units within stratum $s$. 
Under the SRE, the treatment assignments are mutually independent across strata, and, within each stratum $s$, the treatment assignments are from a CRE with $n_{s\treat}$ treated units and $n_{s0} \equiv n_{s} - n_{s\treat}$ control units, where $n_{s\treat}$ and $n_{s0}$ are two fixed positive integers. 

In the following, we first review the existing approaches from \citet{SL22quantile} and \citet{ZL24quantile}. We then study methods for weighting rank statistics across strata. Finally, we investigate how to combine multiple rank statistics.

\subsection{Randomization test with a general rank statistic}\label{sec:single_rank_stat_stratified}

Consider a general distribution-free statistic under the SRE such that the distribution of $t(\bs{Z}, \bs{y})$ is the same for all constant $\bs{y} \in \mathbb{R}^n$.
For any $1\le k \le n_\treat$ and $c\in \mathbb{R}$, 
\citet{ZL24quantile} proposed a valid $p$-value of the same form as in \eqref{eq:pval_t} for testing the null hypothesis $H_{k,c}^\treat$ in \eqref{eq:H_kc_t}.
The key difference from the discussion in Section \ref{sec:CRE} about the CRE is that test statistic $t(\cdot, \cdot)$ needs to be distribution-free under the SRE and its tail probability function $G^{(t)}(\cdot)$ is defined in terms of the random treatment assignment vector from the SRE.

\cite{SL22quantile} considered the following class of stratified rank sum statistics in which we first rank outcomes within each stratum and then sums the contributions across all strata:
\begin{align}\label{eq:rank_sum_stratified}
    t(\bs{z}, \bs{y}) = \sum_{s = 1}^S t_s(\bs{z}_s, \bs{y}_s) = \sum_{s=1}^S \sum_{i=1}^{n_s} z_{si} \phi_s(\rank_i(\bs{y}_s)),
\end{align}
where each $t_s(\cdot, \cdot)$ is a rank sum statistic as in \eqref{eq:rank_sum} for the CRE within stratum $s$.
In \eqref{eq:rank_sum_stratified}, $\phi_s(\cdot)$s are monotone nondecreasing rank transformations, 
and ties in ranking are broken 
based on units' indices after random shuffling within each stratum.

Compared to the CRE, the optimization for the valid $p$-value in \eqref{eq:pval_t} becomes more challenging under the SRE. 
In particular, for the stratified rank sum statistic in \eqref{eq:rank_sum_stratified},
there is no simple closed-form solution for minimizing $t(\bs{Z}, \bs{Y}-\bs{Z}\circ\bs{\delta})$ subject to $\bs{\delta} \in \mathcal{H}_{k,c}^\treat$. 
\cite{SL22quantile} transformed this optimization problem into instances of the multiple-choice knapsack problem, which can be exactly solved in $O(n^2)$ time and conservatively in $O(n)$ time, where the latter still leads to valid but more conservative $p$-values. 
We provide some intuition for their algorithm here. 
The constraint $\bs{\delta} \in \mathcal{H}_{k,c}^\treat$ requires that at most $n_\treat - k$ treated units across all strata have individual
effects greater than $c$, with the remaining treated units having effects at most $c$. Recall that hypothesized effects for control units do
not affect $t(\bs{Z}, \bs{Y}-\bs{Z}\circ\bs{\delta})$.
The question then becomes how to allocate these $n_\treat - k$ ``large-effect'' slots across strata in order to minimize the test statistic.
This allocation can be characterized by  
$(k_1, k_2 \ldots, k_S)$ such that 
(i) $0 \le k_s \le n_{s\treat}$ for all $s$, 
(ii) $\sum_{s=1}^S k_s = k$, 
and (iii) there are at most $n_{s\treat}-k_s$ treated units in stratum $s$ with individual effects greater than $c$, for all $1\le s\le S$. 
Once we know $(k_1, k_2 \ldots, k_S)$, from the discussion on rank sum statistics in Section \ref{sec:single_test}, the infimum of $t(\bs{Z}, \bs{Y}-\bs{Z}\circ\bs{\delta})$ is achieved when, for all $s$, the hypothesized effects $\bs{\delta}_s$ for units in stratum $s$ are equal to $\bs{\xi}_{s, k,c}^\treat$, which is defined analogously as in \eqref{eq:xi_kc_t} but only for units within stratum $s$.
Therefore, 
the optimization problem essentially simplifies to finding the worst-case configuration of $(k_1, k_2 \ldots, k_S)$. 
This turns out to be a multiple-choice knapsack problem, which can be solved exactly by integer linear programming or dynamic programming, 
and conservatively by linear programming or a greedy algorithm; 
see \cite{SL22quantile} for details.

\subsection{Normalized ranks and weighting for varying strata sizes}\label{sec:weighting}

How should we choose a rank transformation when the goal is to combine results
across strata? When all strata have equal size, it is natural to use the same
rank transformation $\phi_s(\cdot)$ for each stratum and aggregate the rank
statistics with equal weights, as in \citet{SL22quantile}. However, the
situation becomes more complicated when strata sizes differ. For
example, consider using the Stephenson rank transformation for all strata with the same
parameter $\zeta$ as in \eqref{eq:stephenson}. First, strata with sizes less
than $\zeta$ will always have zero rank sum, and are thus dropped from 
the analysis. Second, rank sum statistics across strata can have different scales; for example, when $\zeta=5$,
the transformed rank of a unit can be as large as $3876$ in a stratum of size
$20$, whereas it is at most $126$ in a stratum of size $10$. Therefore,
without appropriate weighting, the stratified rank sum statistic in
\eqref{eq:rank_sum_stratified} may discard smaller strata and thus lose power.

To address this issue, we draw upon the literature on the classical nonparametric two-sample tests using rank sum statistics, particularly \citet{van1960combination} and \citet{puri1965combination}. 
The literature considers mainly superpopulation settings with i.i.d.~samples within each stratum and treatment group, and focuses on testing whether the distributions of the two treatment groups are the same within each stratum. 
Despite the difference from our setting with a finite population of experimental units and random treatment assignment, 
the results from that literature provides useful insights for our context. 
In particular, the testing procedure for the equality of distributions within each stratum is equivalent to our randomization test for whether the maximum individual treatment effect among treated units is bounded by zero, i.e., the null hypothesis in \eqref{eq:H_kc_t} with $k=n_\treat$ and $c=0$.
Following \citet{puri1965combination}, we introduce the following normalized rank:
\begin{align}\label{eq:normalized_rank}
    \widetilde{\rank}_i(\bs{y}) = \frac{\rank_i(\bs{y})}{m+1},
    \quad
    (\text{for integer } m\ge 1, \bs{y}\in \mathbb{R}^m, \text{ and } 1\le i\le m ),
\end{align}
which always takes value between $0$ and $1$,
and rewrite the general stratified rank sum statistic in \eqref{eq:rank_sum_stratified} in the following form:
\begin{align}\label{eq:rank_sum_stratified_normalized}
    t(\bs{z}, \bs{y}) = \sum_{s=1}^S w_s t_s(\bs{z}_s, \bs{y}_s),
    \text{ with }
    t_s(\bs{z}_s, \bs{y}_s) =
    \frac{1}{n_{s\treat}} \sum_{i=1}^{n_s} z_{si} \phi_s\big( \widetilde{\rank}_i(\bs{y}_s) \big) \text{ for all } s,
\end{align}
where $\phi_s(\cdot)$s are monotone nondecreasing functions from $[0,1]$ to $\mathbb{R}$ that transform normalized ranks. 
The statistic in \eqref{eq:rank_sum_stratified_normalized} also belongs to the class of stratified rank sum statistics in \eqref{eq:rank_sum_stratified}, by appropriately choosing the rank transformations for each stratum.
However, the form in \eqref{eq:rank_sum_stratified_normalized} has the 
advantage that the rank sum statistics $t_s(\cdot, \cdot)$s across strata are now on the same scale (assuming the transformations $\phi_s(\cdot)$s on $[0,1]$ share a common scale), and the weights $w_s$s can be chosen to balance the contribution from each stratum. For example, a unit ranked highest in a stratum of size 15 receives roughly the same normalized rank as one ranked highest in a stratum of size 43. 

We now discuss weights $w_s$s in \eqref{eq:rank_sum_stratified_normalized}. 
Consider first the case where the rank transformations $\phi_s(\cdot)$s are all identity functions, 
where \eqref{eq:rank_sum_stratified_normalized} reduces to a weighted stratified Wilcoxon rank sum statistic.
\citet{van1960combination} proposed the following two weighting schemes:
\begin{align}\label{eq:weighting_van_opt}
    \text{Scheme 1}: \quad w_s & =
    n_{s\treat}, \quad (s=1,2,\ldots,S),\\
    \label{eq:weighting_van_free}
    \text{Scheme 2}: \quad w_s & = \frac{n_s+1}{n_{s\control}}, \quad (s=1,2,\ldots,S).
\end{align}
Both schemes have theoretical justification, and they coincide when all strata are of equal size and have the same proportion of treated units. \citet{van1960combination} showed that Scheme 1 in \eqref{eq:weighting_van_opt} is asymptotically optimal under a class of local alternatives, and Scheme 2 in \eqref{eq:weighting_van_free} has a design-free property in the sense that the resulting test can be consistent regardless of the sizes of treated and control units within each stratum. These results apply to the usual superpopulation setting for stratified two-sample comparison, under either one of the following two scenarios for the asymptotics: 
\begin{enumerate}[topsep=1ex,itemsep=-0.3ex,partopsep=1ex,parsep=1ex
	]
    \item[(a)] the number of strata $S$ is fixed, and the size of each stratum goes to infinity;
    \item[(b)] the number of strata $S$ goes to infinity, and the size of each stratum remains bounded. 
\end{enumerate}
\citet{puri1965combination} extended \citet{van1960combination} to more general rank transformations. 
When all the rank transformations $\phi_s(\cdot)$ in \eqref{eq:rank_sum_stratified_normalized} equal to some common $\phi(\cdot)$, 
Scheme 1 in \eqref{eq:weighting_van_opt} is still asymptotically optimal under a class of local alternatives, and Scheme 2 in \eqref{eq:weighting_van_free} still has the design-free property by the same logic as \citet{van1960combination}. However, \citet{puri1965combination} established these results only under Scenario (a); whether they extend to Scenario (b) remains an open question beyond the scope of this paper.

The above results from \citet{van1960combination} and \citet{puri1965combination}, although derived under superpopulation settings with certain assumptions on the data-generating process, 
provide useful guidance for choosing weights $w_s$s in \eqref{eq:rank_sum_stratified_normalized} for the stratified rank sum statistics. 
We study the two weighting schemes in \eqref{eq:weighting_van_opt} and \eqref{eq:weighting_van_free}. 
We show finite-sample performance for inferring quantiles of treatment effects through simulation studies in the supplementary material.

\subsection{Combining multiple stratified rank sum statistics} \label{sec:combine_sre}

Recall that our proposal to increase power while avoiding multiple testing is to combine multiple rank statistics. How should these statistics be combined in stratified experiments?
We first discuss the choice of individual rank statistics, and then examine two approaches for combining them.

\subsubsection{Polynomial rank transformations}

From the discussion in Section \ref{sec:weighting}, we consider stratified rank sum statistics of form \eqref{eq:rank_sum_stratified_normalized} with the same rank transformation for all strata, i.e., $\phi_1 = \phi_2 = \ldots \phi_S = \phi$ for some $\phi(\cdot)$, 
and choose the weights $w_s$s based on \eqref{eq:weighting_van_opt} or \eqref{eq:weighting_van_free}. 
To make the test more sensitive to the tails of the outcomes as the Stephenson rank test,
we consider the following polynomial transformation for the normalized ranks:
\begin{align}\label{eq:polynomial}
    \phi(x) = x^{\zeta-1}, \quad (\text{for some } \zeta \ge 1).
\end{align}
In other words, raising the normalized rank to a power amplifies the contrast between high and low ranks, analogous to the Stephenson rank transformation in \eqref{eq:stephenson} while avoiding zero ranks for small strata. When $\zeta = 2$, the transformation is linear and all ranks contribute proportionally. When $\zeta = 10$, a unit at the 90th percentile contributes about 39\% as much as one at the 100th percentile, while one at the 50th percentile contributes only 0.2\%. The polynomial transformation with normalized ranks thus achieves the tail-sensitivity of the Stephenson rank transformation while remaining smooth across strata of different sizes.

Similar to Sections \ref{subsec:min_pval} and \ref{subsec:combine_test}, 
we are interested in adaptively selecting from multiple stratified rank sum statistics, as their power can vary under different true data-generating processes. 
Specifically, let $t^{(1)}(\cdot, \cdot), \ldots, t^{(H)}(\cdot, \cdot)$ be $H$ stratified rank sum statistics under consideration, 
each of which has the form in \eqref{eq:rank_sum_stratified_normalized} with some rank transformation $\phi_s^{(h)}$s and weights $w_s^{(h)}$s, for $1\le h\le H$. 
From Section \ref{sec:weighting}, we will consider mainly the case where each stratified rank sum statistic uses a common rank transformation $\phi^{(h)}$ for all strata with   weights $w_s^{(h)}$ based on \eqref{eq:weighting_van_opt} or \eqref{eq:weighting_van_free}. 
Nevertheless, here we allow arbitrary stratum-specific rank transformations, as they do not complicate the implementation of the resulting randomization test using a combination of these test statistics. 

In the following two subsections, we consider two ways to combining these $H$ rank statistics, respectively. In the first approach, we directly combine the $H$ stratified rank sum statistics. In the second approach, we consider first combining the $H$ rank sum statistics within stratum, and then aggregate them across all the strata. 
In other words, 
in the first approach, we first aggregate across all the strata using a single rank transformation and then combine multiple rank transformations; whereas in the second approach, we combine multiple rank transformations within each stratum and then aggregate them across strata. 
We show the performance of these two approaches in the empirical application in Section \ref{sec:sredata}, and in simulation studies reported in the supplementary material: in both the application and simulations, the second approach performs better than the first.

\subsubsection{Combination of the aggregated statistics from all strata} \label{sec:comb1}

When considering the CRE, we discussed two approaches for combining multiple rank sum statistics: one uses the minimum $p$-value with careful calibration, and the other uses the maximum of these rank sum statistics after proper standardization. 
These two approaches can be equivalent under the CRE. 
This equivalence, however, may not hold under the SRE. 
The reason is that, under the SRE, the $p$-value in \eqref{eq:pval_t} for each stratified rank sum statistic may be achieved at different values of the hypothesized effect $\bs{\delta}$, or more specifically, at different configurations of $(k_1, \ldots, k_S)$ in the corresponding multiple-choice knapsack problem as discussed in Section \ref{sec:single_rank_stat_stratified}.
Consequently, under the SRE, using the minimum $p$-value approach as in Theorem \ref{thm:cre_min_pval} can be more conservative than using the combined test statistic approach as in Theorem \ref{thm:cre_min_pval_combined}. 

Similar to Theorem \ref{thm:cre_min_pval_combined}, we consider the following combined stratified rank sum statistic:
\begin{align}\label{eq:combined_statistic_tail_prob_std}
    -\overline{t}(\bs{z}, \bs{y}) = - \min_{1\le h \le H} G^{(h)}(t^{(h)}(\bs{z}, \bs{y}))
    =
    \max_{1\le h \le H} \big\{ - G^{(h)}(t^{(h)}(\bs{z}, \bs{y})) \big\},
\end{align}
where $G^{(h)}(\cdot)$ is the tail probability function for each individual stratified rank sum statistic $t^{(h)}(\bs{Z}, \bs{y})$ under the SRE.

The statistic in \eqref{eq:combined_statistic_tail_prob_std} is still distribution-free under the SRE, and the $p$-value in \eqref{eq:pval_t} with $t=-\overline{t}$ is still valid for testing the null hypothesis $H_{k, c}^\treat$ in \eqref{eq:H_kc_t}.
However, calculating this $p$-value---equivalently, minimizing $-\overline{t}(\bs{Z}, \bs{Y} - \bs{Z}\circ \bs{\delta})$ over $\bs{\delta} \in \mathcal{H}_{k,c}^\treat$---is computationally demanding. 
Even a single stratified rank statistic requires solving a knapsack problem, and the combined statistic further complicates this task: different component statistics may attain their minima at different stratum allocations $(k_1, \ldots, k_S).$

To simplify the computation, we first invoke the Gaussian approximation for the combined stratified rank sum statistic in \eqref{eq:combined_statistic_tail_prob_std}. 
Under the SRE, for any constant $\bs{y} \in \mathbb{R}^n$, each stratified rank sum statistic $t^{(h)}(\bs{Z}, \bs{y})$ has mean and variance 
\begin{align*}
    \mu^{(h)} = \sum_{s=1}^S w_s^{(h)} \mu_{s}^{(h)}, 
    \quad 
    \big(\sigma^{(h)}\big)^2 & =\sum_{s=1}^S (w_s^{(h)})^2 \big(\sigma_{s}^{(h)}\big)^2, 
    \qquad (1\le h \le H),
\end{align*}
where, for $1\le s\le S$ and $1\le h \le H$, 
\begin{align}\label{eq:mu_s}
    \mu_{s}^{(h)} & = \E \{t^{(h)}_s(\bs{Z}, \bs{y})\} = \frac{1}{n_s} \sum_{i=1}^{n_s} \phi_s^{(h)}\left( \frac{i}{n_s+1} \right), 
    \\
    \label{eq:sigma_s}
    \big(\sigma_{s}^{(h)}\big)^2 & = \Var\{ t_s^{(h)}(\bs{Z}, \bs{y}) \}
    = 
    \frac{n_{s\control}}{n_{s\treat}n_s(n_s-1)} \sum_{i=1}^{n_s} \left\{ \phi_s^{(h)}\left( \frac{i}{n_s+1} \right) - \mu_s^{(h)} \right\}^2, 
\end{align}
We then consider the following maximum of the standardized stratified rank sum statistics:
\begin{align}\label{eq:combined_stat_max_sre}
    \tilde{t}(\bs{z}, \bs{y}) = \max_{1\le h\le H} \frac{t^{(h)}(\bs{z}, \bs{y})-\mu^{(h)}}{\sigma^{(h)}},
\end{align}
which is again distribution-free under the SRE.

By the same logic as in \eqref{eq:combined_stat_approx} and the finite population central limit theorem for the SRE \citep{Bickel1984, LY2020stratifiedadj}, 
$
-\overline{t}(\bs{z}, \bs{y}) \approx 
-\overline{\Phi}(\tilde{t}(\bs{z}, \bs{y}))$. 
Thus, the randomization $p$-values in \eqref{eq:pval_t} for testing treatment effect quantiles with test statistics in \eqref{eq:combined_statistic_tail_prob_std} and \eqref{eq:combined_stat_max_sre} are approximately the same when the sample size is large.

The combined test statistic in \eqref{eq:combined_stat_max_sre} has a simpler form that avoids the use of the tail probability functions $G^{(h)}(\cdot)$, which typically lack simple closed-form expressions and require Monte Carlo approximations.
More importantly, the combined test statistic in \eqref{eq:combined_stat_max_sre} can facilitate the optimization for calculating the $p$-value in \eqref{eq:pval_t} through integer linear programming.
Finally, similar to the discussion after Theorem \ref{thm:cre_min_pval_combined_gaussian}, 
the combined test statistic in \eqref{eq:combined_stat_max_sre}, despite being motivated by the large-sample Gaussian approximation, always leads to finite-sample valid $p$-values under the SRE.
For completeness, we give the following theorem. 

\begin{theorem}\label{thm:combined_stat_max_sre}
    Under the SRE, the combined stratified rank sum statistic in \eqref{eq:combined_stat_max_sre} is distribution-free, and the $p$-value in \eqref{eq:pval_t} with $t$ equal to $\tilde{t}$ in \eqref{eq:combined_stat_max_sre} is valid for testing the null hypothesis $H_{k, c}^\treat$ in \eqref{eq:H_kc_t}, for any $1\le k \le n_\treat$ and $c \in \mathbb{R}$.
\end{theorem}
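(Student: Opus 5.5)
The proof has two parts: show that $\tilde{t}$ is distribution-free under the SRE, and then rerun the generic validity argument for \eqref{eq:pval_t} from Section \ref{sec:single_test}.

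\emph{Distribution-freeness.} Fix any $\bs{y}\in\mathbb{R}^n$. Within stratum $s$, ties are broken by (pre-shuffled) indices, so the rank map $i\mapsto \rank_i(\bs{y}_s)$ is a bijection from $\{1,\ldots,n_s\}$ onto itself. Let $\pi_s$ denote this permutation. Then
\[
t_s^{(h)}(\bs{Z}_s,\bs{y}_s)=n_{s\treat}^{-1}\sum_{i=1}^{n_s} Z_{si}\,\phi_s^{(h)}\big(\pi_s(i)/(n_s+1)\big)
\]
for every $h$. Now define $\bs{Z}'_s$ by $Z'_{s,\pi_s(i)}=Z_{si}$. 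Under the SRE, $\bs{Z}_s$ is uniform over the vectors in $\{0,1\}^{n_s}$ with exactly $n_{s\treat}$ ones, and this set is permutation invariant, so $\bs{Z}'_s$ has the same law as $\bs{Z}_s$. Independence across strata is also preserved.

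It follows that the whole vector $\big(t^{(1)}(\bs{Z},\bs{y}),\ldots,t^{(H)}(\bs{Z},\bs{y})\big)$ has the same joint law as
\[
\Big(\sum_s w_s^{(h)} n_{s\treat}^{-1}\sum_j Z'_{sj}\,\phi_s^{(h)}\big(j/(n_s+1)\big)\Big)_{h=1}^H.
\]
This expression does not involve $\bs{y}$. The constants $\mu^{(h)}$ and $\sigma^{(h)}$ depend only on the design and the $\phi_s^{(h)}$, and the map $(t^{(1)},\ldots,t^{(H)})\mapsto \max_h (t^{(h)}-\mu^{(h)})/\sigma^{(h)}$ is a fixed measurable function. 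Therefore $\tilde{t}(\bs{Z},\bs{y})$ has a law free of $\bs{y}$, and $G^{(\tilde t)}$ is well defined.

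The joint distribution-freeness, not just each marginal, is what the argument needs, and the coupling through one common $\bs{Z}'$ delivers it for all $h$ at once. We should also assume $\sigma^{(h)}>0$, i.e.\ each $t^{(h)}$ is non-degenerate; otherwise that term is dropped or treated as constant, and the argument is unchanged.

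\emph{Validity.} Suppose $H^\treat_{k,c}$ holds, so $\bs{\tau}\in\mathcal{H}^\treat_{k,c}$. Since $\bs{Y}-\bs{Z}\circ\bs{\tau}=\bs{Y}(0)$,
\[
\inf_{\bs{\delta}\in\mathcal{H}^\treat_{k,c}}\tilde t(\bs{Z},\bs{Y}-\bs{Z}\circ\bs{\delta})\le \tilde t(\bs{Z},\bs{Y}(0)).
\]
Because $G^{(\tilde t)}$ is non-increasing, $p^{\treat(\tilde t)}_{k,c}\ge G^{(\tilde t)}(\tilde t(\bs{Z},\bs{Y}(0)))$ on this event. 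Hence
\[
\Pr\big(p^{\treat(\tilde t)}_{k,c}\le\alpha \text{ and } H^\treat_{k,c}\text{ holds}\big)\le \Pr\big\{G^{(\tilde t)}(T)\le\alpha\big\},\qquad T=\tilde t(\bs{Z},\bs{Y}(0)).
\]
Here $\bs{Y}(0)$ is a fixed constant vector, so by the first part $G^{(\tilde t)}$ is exactly the tail function of $T$.

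It remains to check the standard fact that $\Pr\{G(T)\le\alpha\}\le\alpha$ when $G(x)=\Pr(T\ge x)$ and $T$ has a discrete (finite-support) law. Let $x^*=\inf\{x: G(x)\le\alpha\}$. The event $\{G(T)\le\alpha\}$ is then $\{T\ge x^*\}$ with $x^*$ attained in the support, or it is empty, so its probability is $G(x^*)\le\alpha$. This step is routine. The only point needing care is the random-null formulation, and the inequality chain above handles it because the bound holds pointwise on the event that $H^\treat_{k,c}$ holds.
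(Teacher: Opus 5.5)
This is correct and is essentially the paper's proof: Lemma~\ref{lem:dist_free_comb_sre} uses the same within-stratum rank permutations to show that $\max_h f^{(h)}(t^{(h)}(\bs{Z},\bs{y}))$ is distribution-free under the SRE, and validity then follows from the monotonicity argument you give, which the paper imports from Theorem 6 of \citet{ZL24quantile}. One small repair in your last step: with $x^*=\inf\{x\in\mathbb{R}:G(x)\le\alpha\}$ one typically has $G(x^*)>\alpha$, so the event is $\{T>x^*\}$, whose probability is $\lim_{x\downarrow x^*}G(x)\le\alpha$; alternatively, define $x^*$ as the smallest support point of $T$ with $G(x^*)\le\alpha$, and then $\{G(T)\le\alpha\}=\{T\ge x^*\}$ holds as you wrote.
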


\begin{remark}\label{rmk:interval_scre}
    Corollary \ref{cor:interval_cre} on prediction and confidence intervals for treatment effect quantiles under the CRE is also true for the SRE, except that we require the test statistic to be distribution-free under the SRE. 
    We summarize the results below for conciseness. 
    For any $\alpha\in (0,1)$, 
    we can invert the test in Theorem \ref{thm:combined_stat_max_sre} to construct simultaneously valid $1-\alpha$ prediction intervals for treatment effect quantiles among treated units. 
    By switching the treatment labels and changing outcome signs, we can also construct simultaneously valid $1-\alpha$ prediction intervals for treatment effect quantiles among control units. 
    Combining these two sets of prediction intervals, we can further obtain simultaneously valid $1-2\alpha$ confidence intervals for treatment effect quantiles among all units.
\end{remark}

\subsubsection{Aggregation of the combined statistics within each stratum} \label{sec:comb2}

We now consider an alternative way of combining multiple rank statistics under the SRE. Specifically, we first perform the combination of rank statistics for each stratum separately, by considering
\begin{align}\label{eq:comb_per_stratum}
    \tilde{t}_s(\bs{z}_s, \bs{y}_s)
    =
    \max_{1\le h \le H} \frac{t_s^{(h)}(\bs{z}_s, \bs{y}_s) - \mu_s^{(h)}}{
    \text{sd}\{ \phi_s^{(h)}(U) \}
    },
    \quad (1\le s\le S)
\end{align}
where $\mu_s^{(h)}$ is defined as in \eqref{eq:mu_s},
$\text{sd}\{ \phi_s^{(h)}(U) \}$ denotes the standard deviation of $\phi_s^{(h)}(U)$ when $U\sim \text{Uniform}(0,1)$, and it simplifies to $\sqrt{(2\zeta-1)^{-1} - \zeta^{-2}}$ when $\phi_s^{(h)}(x) = x^{\zeta-1}$.
In other words, for each stratum we ask: which rank transformation best captures the treatment–control separation? We take the maximum of the standardized rank sum statistics within each stratum. A stratum in which treatment yields a few exceptionally large treated outcomes may favor large $\zeta$, whereas a stratum with broad, uniform gains may favor small $\zeta$.

The test statistic aggregated from all strata is then
\begin{align}\label{eq:comb_then_agg}
    \dtilde{t}(\bs{z}, \bs{y}) & = \sum_{s=1}^S w_s \tilde{t}_s(\bs{z}_s, \bs{y}_s),
\end{align}
where $w_s$s denote the weights for each stratum and can be chosen based on the two schemes in \eqref{eq:weighting_van_opt} and \eqref{eq:weighting_van_free}.
This ``combine-then-aggregate'' approach differs from the ``aggregate-then-combine'' approach in Section \ref{sec:comb1}: here, adaptation happens locally within each stratum before aggregation, potentially capturing heterogeneity across strata that a global combination may miss.

Below we explain the rationale for the choice of test statistics in \eqref{eq:comb_per_stratum} and \eqref{eq:comb_then_agg}.
We first consider the combination in \eqref{eq:comb_per_stratum} within any stratum $s$. When the number of stratum size $n_s$ is large, 
$\sigma_s^{(h)}$ defined as in \eqref{eq:sigma_s} is approximately $\{ n_{s0}/(n_{s1} n_s) \}^{1/2} \cdot \text{sd}\{ \phi_s^{(h)}(U) \}$. Consequently, the combined statistic for stratum $s$ in \eqref{eq:comb_per_stratum} is approximately 
\begin{align*}
    \tilde{t}_s(\bs{z}_s, \bs{y}_s)
    \approx
    \sqrt{\frac{n_{s0}}{n_{s1} n_s}} \cdot 
    \max_{1\le h \le H} \frac{t_s^{(h)}(\bs{z}_s, \bs{y}_s) - \mu_s^{(h)}}{
    \sigma_s^{(h)}
    }, 
\end{align*}
which, up to a constant scaling, is the maximum of the standardized rank statistics for stratum $s$ and has the same form as those combined statistics in \eqref{eq:combined_stat_max} and 
\eqref{eq:combined_stat_max_sre}. 
Therefore, we are essentially applying a similar combination as before, but within each stratum separately. 

We then consider the aggregation in \eqref{eq:comb_then_agg} from all strata. 
Suppose we apply the same rank transformation across all strata (i.e., $\phi_s^{(h)}$ equals some $\phi^{(h)}$ for all $s$) 
and 
there is a single rank statistic under consideration (i.e., $H=1$); then the aggregated statistic in \eqref{eq:comb_then_agg} simplifies to 
\begin{align*}
    \dtilde{t}(\bs{z}, \bs{y}) & = \sum_{s=1}^S w_s \frac{t_s^{(h)}(\bs{z}_s, \bs{y}_s) - \mu_s^{(h)}}{
    \text{sd}\{ \phi^{(h)}(U) \}
    }
    = 
    \frac{1}{\text{sd}\{ \phi^{(h)}(U) \}} \cdot \sum_{s=1}^S w_s t_s^{(h)}(\bs{z}_s, \bs{y}_s) - 
    \frac{\sum_{s=1}^S w_s \mu_s^{(h)}}{\text{sd}\{ \phi^{(h)}(U) \}}, 
\end{align*}
which has the same form as \eqref{eq:rank_sum_stratified_normalized} up to some constant scaling and shifting. 
Therefore, the weighting schemes discussed in Section \ref{sec:weighting} still apply here when $H=1$. 
Moreover, since the properties of the weighting schemes in \eqref{eq:weighting_van_opt} and \eqref{eq:weighting_van_free} hold for any choice of common rank transformation for all strata, it seems intuitive to apply the same weighting scheme when we use the combined rank statistic as in \eqref{eq:comb_per_stratum} for each stratum.
This also explains why we use $\text{sd}\{ \phi_s^{(h)}(U) \}$ instead of $\sigma_s^{(h)}$ for scaling in \eqref{eq:comb_per_stratum}, as this preserves a similar weighting scheme.

The test statistic in \eqref{eq:comb_then_agg} remains distribution-free under the SRE, enabling valid inference for quantiles of individual treatment effects. 
We summarize this in the following theorem. 

\begin{theorem}\label{thm:combine_then_aggregate}
    Theorem \ref{thm:combined_stat_max_sre} holds when we replace $\tilde{t}$ by $\dtilde{t}$, leading to a valid $p$-value for testing the null hypothesis $H_{k, c}^\treat$ in \eqref{eq:H_kc_t}, for any $1\le k \le n_\treat$ and $c \in \mathbb{R}$.
\end{theorem}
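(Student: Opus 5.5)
The proof has two parts: first show that $\dtilde{t}$ is distribution-free under the SRE, then rerun the generic validity argument from Section~\ref{sec:single_test} for $p$-values of the form \eqref{eq:pval_t}.

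\textbf{Distribution-freeness.} Under the SRE, $\bs{Z}_1,\ldots,\bs{Z}_S$ are mutually independent, and each $\bs{Z}_s$ follows a CRE within stratum $s$. Fix any $\bs{y}\in\mathbb{R}^n$. Within stratum $s$, tie-breaking by index makes the map $i\mapsto \rank_i(\bs{y}_s)$ a permutation $\pi_s$ of $\{1,\ldots,n_s\}$. Hence
\[
t_s^{(h)}(\bs{z}_s,\bs{y}_s)=n_{s\treat}^{-1}\sum_{i=1}^{n_s} z_{si}\,\phi_s^{(h)}\bigl(\pi_s(i)/(n_s+1)\bigr)
\]
for every $h$. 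So $\tilde{t}_s(\bs{z}_s,\bs{y}_s)=g_s(\bs{z}_s\circ\pi_s^{-1})$ for one fixed function $g_s$ that does not depend on $\bs{y}$. Here $\mu_s^{(h)}$ and $\text{sd}\{\phi_s^{(h)}(U)\}$ are constants, and we use the same $g_s$ for all $h$ simultaneously. Under a CRE in stratum $s$, the permuted vector $\bs{Z}_s\circ\pi_s^{-1}$ has the same distribution as $\bs{Z}_s$, namely uniform over vectors with $n_{s\treat}$ ones. Therefore the law of $\tilde{t}_s(\bs{Z}_s,\bs{y}_s)$ equals the law of $g_s(\bs{Z}_s)$, which does not depend on $\bs{y}_s$. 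By independence across strata, the random vector $(\tilde{t}_1,\ldots,\tilde{t}_S)$ has a joint law free of $\bs{y}$. Since the weights $w_s$ are fixed constants, the weighted sum $\dtilde{t}(\bs{Z},\bs{y})$ is distribution-free.

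\textbf{Validity.} Define $G^{(\dtilde{t})}(c)=\Pr\{\dtilde{t}(\bs{Z},\bs{y})\ge c\}$. This is well defined for any $\bs{y}$ by the first part, and it is nonincreasing in $c$. Now suppose $H_{k,c}^\treat$ holds. Then $\bs{\tau}\in\mathcal{H}_{k,c}^\treat$, so the infimum over $\mathcal{H}_{k,c}^\treat$ is at most the value at $\bs{\delta}=\bs{\tau}$:
\[
\inf_{\bs{\delta}\in\mathcal{H}_{k,c}^\treat}\dtilde{t}(\bs{Z},\bs{Y}-\bs{Z}\circ\bs{\delta})\le \dtilde{t}(\bs{Z},\bs{Y}(0)).
\]
Applying the nonincreasing $G^{(\dtilde{t})}$ gives
\[
\Pr\bigl(p^{\treat(\dtilde{t})}_{k,c}\le\alpha,\ H^\treat_{k,c}\text{ holds}\bigr)\le \Pr\bigl\{G^{(\dtilde{t})}\bigl(\dtilde{t}(\bs{Z},\bs{Y}(0))\bigr)\le\alpha\bigr\}.
\]
The vector $\bs{Y}(0)$ is fixed, so $G^{(\dtilde{t})}$ is exactly the tail function of $T=\dtilde{t}(\bs{Z},\bs{Y}(0))$. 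The standard fact that $\Pr\{G(T)\le\alpha\}\le\alpha$ for the upper-tail function $G$ of a discrete $T$ then finishes the proof. To verify this fact, let $c^*=\inf\{c: G(c)\le\alpha\}$ and apply right-continuity properties of $G$ on the finite support of $T$.

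\textbf{Main obstacle.} The only delicate step is the permutation argument in the first part. We must check that a single relabeling $\pi_s$ works for all $H$ components at once, so that the maximum inside $\tilde{t}_s$ is preserved. This holds because every $t_s^{(h)}$ depends on $\bs{y}_s$ only through the same rank permutation. We must also check that the deterministic tie-breaking keeps the ranks a genuine permutation even when $\bs{y}_s$ has ties, which is true by construction.
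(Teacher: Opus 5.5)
Your proposal is correct and follows the same route as the paper. You prove distribution-freeness of each $\tilde{t}_s$ by the within-stratum rank-permutation argument of Lemma~\ref{lem:dist_free_comb}, use independence across strata to pass to $\dtilde{t}$, and then write out explicitly the generic validity argument that the paper cites from \citet{ZL24quantile}.

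One small wording point: $G(c)=\Pr(T\ge c)$ is left-continuous, not right-continuous. In the case $G(c^*)>\alpha$, the last step should therefore use the right limit $\Pr(T>c^*)=\lim_{c\downarrow c^*}G(c)\le\alpha$, mirroring the two-case argument in the proof of Corollary~\ref{cor:cre_min_pval}.
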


By the same logic as Remark \ref{rmk:interval_scre}, we can construct prediction or confidence intervals for quantiles of individual effects among treated, control or all units.

\subsection{Optimization of the $p$-value from the combination of multiple rank statistics}

The optimization for the $p$-value in Theorem \ref{thm:combine_then_aggregate} can be solved in almost the same way as \cite{SL22quantile}. 
Once we know there are at most $n_{s\treat}-k_s$ treated units in stratum $s$ with individual effects greater than $c$ for all $1\le s \le S$, 
the minimum value of the test statistic $\dtilde{t}(\bs{Z}, \bs{Y}-\bs{Z}\circ\bs{\delta})$ is achieved when the hypothesized effects $\bs{\delta}_s$ for units in each stratum $s$ equal $\bs{\xi}_{s, k,c}^\treat$, defined as in \eqref{eq:xi_kc_t} but only for units within stratum $s$.
Therefore, it suffices to find the worst-case configuration of the $k_s$s, which can be done in the same way as in \cite{SL22quantile}. 
We can obtain the minimum value of the test statistic exactly using integer linear programming or dynamic programming, or conservatively using linear programming or a greedy algorithm, with the latter yielding valid but more conservative $p$-values.

Below we focus on the optimization for the $p$-value in Theorem \ref{thm:combined_stat_max_sre}. 
From \eqref{eq:pval_t}, we need to minimize $\tilde{t}(\bs{Z}, \bs{Y} - \bs{Z}\circ \bs{\delta})$ subject to the constraint that $\bs{\delta} \in \mathcal{H}_{k, c}^\treat$, where $\tilde{t}(\cdot, \cdot)$ is the combined stratified rank sum statistic in \eqref{eq:combined_stat_max_sre}. 
By the same logic as the discussion in Section \ref{sec:single_rank_stat_stratified}, we can reformulate the optimization as a search for the worst-case allocation of treated units with large effects across strata. 
Specifically, 
analogous to \eqref{eq:H_kc_t},  
define $\mathcal{H}_{s,k_s,c}^\treat \subset \mathbb{R}^{n_s}$ to denote the set of all possible values of individual treatment effect vectors for units in stratum $s$ such that there are at most $n_{s\treat} - k_s$ treated units having individual effects greater than $c$ in the stratum, for all $1\le s \le S$ and $0\le k_s \le n_{s\treat}$. 
For simplicity, we will write hypothesized individual effects as $\bs{\delta} = (\bs{\delta}_1, \bs{\delta}_2, \ldots, \bs{\delta}_S)$, with $\bs{\delta}_s \in \mathbb{R}^{n_s}$ representing the hypothesized individual effects for stratum $s$. 
The minimum of $\tilde{t}(\bs{Z}, \bs{Y} - \bs{Z}\circ \bs{\delta})$ under the constraint that $\bs{\delta} = (\bs{\delta}_1, \ldots, \bs{\delta}_S) \in \mathcal{H}_{k, c}^\treat$
is equivalent to
\begin{equation}\label{eq:obj_given_ks}
    \inf_{\bs{\delta}
    \in \prod_{s=1}^S \mathcal{H}_{s,k_s,c}^\treat}\tilde{t}(\bs{Z}, \bs{Y} - \bs{Z}\circ \bs{\delta})
    =
    \inf_{\bs{\delta}_s \in \mathcal{H}_{s,k_s,c}^\treat \forall s}\max_{1\le h\le H} \frac{
        \sum_{s=1}^S w_s^{(h)} t_s^{(h)}(\bs{Z}_s, \bs{Y}_s - \bs{Z}_s\circ \bs{\delta}_s)-\mu^{(h)}
        }{\sigma^{(h)}},
\end{equation}
subject to the constraint that
\begin{align}\label{eq:constraint_ks}
    0\le k_s \le n_{s\treat} \text{ for all } s,
    \text{ and }
    \sum_{s=1}^S k_s = k.
\end{align}
In other words, to test a hypothesis like $H_{k,c}^\treat$, a researcher must decide how to allocate the $k$ units with effects exceeding $c$ across the $S$ strata. The constraint says we can assign between zero and all treated units in each stratum to have large effects, as long as the total across strata equals $n_1- k$. For each allocation, we compute the worst-case test statistic; then we search over all allocations to find the one most favorable to the null.

Given any fixed $(k_1, \ldots, k_S)$ satisfying \eqref{eq:constraint_ks}, the optimization in \eqref{eq:obj_given_ks} has a closed-form solution. 
This is because for each stratum $1\le s\le S$, the rank sum statistic $t_s^{(h)}(\bs{Z}_s, \bs{Y}_s - \bs{Z}_s\circ \bs{\delta}_s)$ achieves its infimum at the same configuration of $\bs{\delta}_s\in \mathcal{H}_{s,k_s,c}^\treat$ for all $h$. 
This follows by the same logic as the discussion on rank sum statistic under the CRE in Section \ref{sec:single_test}. 
Specifically, for all $h$ and $s$, 
$\inf_{\bs{\delta}_s\in \mathcal{H}_{s,k_s,c}^\treat}t_s^{(h)}(\bs{Z}_s, \bs{Y}_s - \bs{Z}_s\circ \bs{\delta}_s)
= 
t_s^{(h)}(\bs{Z}_s, \bs{Y}_s - \bs{Z}_s\circ \bs{\xi}_{s,k,c}^\treat),
$
where $\bs{\xi}_{s,k,c}^\treat$ defined analogously as in \eqref{eq:xi_kc_t} but only for units in stratum $s$. 
For any $1\le h\le H$ and $1\le s\le S$, let 
\begin{align*}
    t_s^{(h)}(k,c) \equiv 
    t_s^{(h)}(\bs{Z}_s, \bs{Y}_s - \bs{Z}_s\circ \bs{\xi}_{s,k,c}^\treat)
    & = \inf_{\bs{\delta}_s\in \mathcal{H}_{s,k_s,c}^\treat}t_s^{(h)}(\bs{Z}_s, \bs{Y}_s - \bs{Z}_s\circ \bs{\delta}_s)
\end{align*}
We can then equivalently write \eqref{eq:obj_given_ks} as 
\begin{align}\label{eq:obj_given_ks_equiv} 
     \inf_{\bs{\delta} \in \prod_{s=1}^S \mathcal{H}_{s,k_s,c}^\treat}\tilde{t}(\bs{Z}, \bs{Y} - \bs{Z}\circ \bs{\delta})
    & = 
    \max_{1\le h\le H} \frac{
        \sum_{s=1}^S w_s^{(h)} t_s^{(h)}(k,c)-\mu^{(h)}
        }{\sigma^{(h)}}.
\end{align}

To formulate the optimization of \eqref{eq:obj_given_ks} or equivalently \eqref{eq:obj_given_ks_equiv} under the constraint \eqref{eq:constraint_ks} as an integer linear programming problem,
we introduce binary variables $x_{sj}\in \{0,1\}$ for all $1\le s\le S$ and $0\le j\le n_{s1}$, and use them to represent the values of the $k_s$s. 
For each $s$, $k_s = m$ if and only if $x_{sm} = 1$ and $x_{sj} = 0$ for $j\ne m$. 
The optimization of \eqref{eq:obj_given_ks_equiv} under the constraint \eqref{eq:constraint_ks} can then be reformulated as
\begin{align}\label{eq:ILP}
    \min \quad & {t}_\star
    \\
    \text{subject to} \quad
    &
    \sigma^{(h)} {t}_\star \ge
    \sum_{s=1}^S w_s^{(h)} \sum_{j=0}^{n_{s1}} x_{sj} t_s^{(h)}(j, c)-\mu^{(h)} \text{ for } 1\le h \le H
    \nonumber
    \\
    & \sum_{j=0}^{n_{s\treat}} x_{sj} = 1 \text{ for } 1\le s \le S,
    \qquad
    \sum_{s=1}^S \sum_{j=0}^{n_{s\treat}} x_{sj} j = k, 
    \nonumber
    \\
    &x_{sj} \in \{0, 1\} \text{ for } 1 \le s \le S \text{ and } 0\le j \le n_{s\treat}.
    \nonumber
\end{align}

In other words, we introduce binary decision variables $x_{sj}$ that equal 1 if stratum $s$ contributes exactly $n_{s1}- j$ units with effects exceeding $c$, and 0 otherwise. The first constraint ensures that the objective $t_\star$ is at least as large as each individual standardized statistic---making $t_\star$ the maximum. The second constraint says each stratum picks exactly one allocation. The third says the allocations sum to $k$ across strata. 
We can also relax the integer constraint in \eqref{eq:ILP} by allowing $x_{sj} \in [0,1]$ for all $s$ and $j$, in which case the optimization becomes a linear programming problem  that is solvable in polynomial time.
Moreover, this relaxation also leads to valid, although more conservative, $p$-values for testing treatment effect quantiles.

\section{Simulation studies}\label{sec:simu}

To save space, we relegate the details of the simulation studies to the supplementary material and summarize the main findings here. 
First, under the CRE, different individual rank sum statistics are preferred for different data-generating processes and different quantiles of interest, whereas the combined statistic consistently performs close to the best individual one. 
Second, under the SRE, the weighting scheme in \eqref{eq:weighting_van_opt} outperforms that in \eqref{eq:weighting_van_free}, and the ``combine-then-aggregate'' approach in Section \ref{sec:comb2} outperforms the ``aggregate-then-combine'' approach in Section \ref{sec:comb1}.

\section{Application: Detecting the causal effects of teacher training}\label{sec:app}

\subsection{Analyses under a CRE} \label{sec:credata}

We revisit the teacher training experiment \citep{HSMHSD10}   previously analyzed in
\citet{CDLM21quantile} and \citet{ZL24quantile}, where a group of fourth grade
teachers were randomly assigned to treatment and control. Treated teachers
participated in professional development courses focused on teaching about
electric circuits. The outcome of interest is change in teachers' content knowledge of
electric circuits, measured by pre- and post-course
assessments. We use the same dataset as in \citet{CDLM21quantile} and
\citet{ZL24quantile} following their preprocessing steps, and here analyze it as a
CRE with $164$ treated teachers and $69$ control teachers. As a benchmark, we
first apply the method of \citet{ZL24quantile} using the Stephenson rank sum
statistic with $\zeta = 2$, $6$, $10$, and $30$, respectively. We then apply
the proposed approach combining these four rank statistics. 

\definecolor{myred}{RGB}{218,72,96}
\definecolor{myblue}{RGB}{30,139,226}
\definecolor{mycyan}{RGB}{36,221,224}
\definecolor{mygreen}{RGB}{85,202,70}
\definecolor{mypurple}{RGB}{182,100,218}

\begin{figure}[htbp]
    \centering
    
    \begin{subfigure}[b]{0.4\linewidth}
    \includegraphics[width=\linewidth]{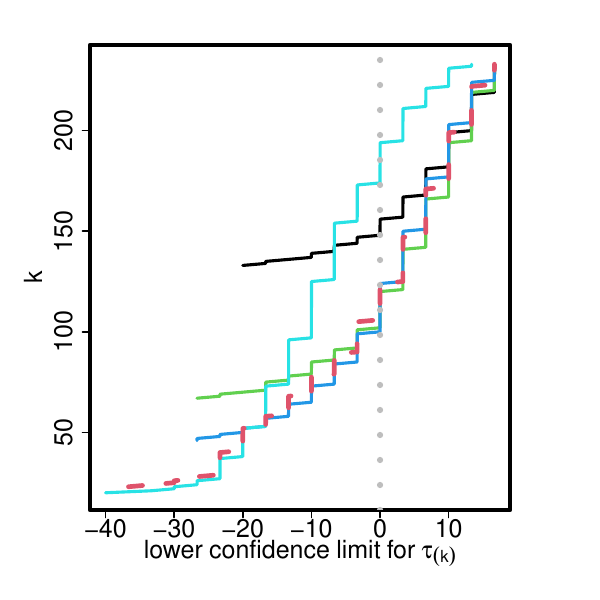}
    \end{subfigure}
    \begin{subfigure}[b]{0.1\linewidth}
        \begin{tikzpicture}
    \node[draw, rounded corners, inner sep = 2pt] (legend){
    \begin{tikzpicture}[font = \tiny]
        \matrix[column sep = 1mm]{
        \draw[black, thick] (0,0) -- (0.3,0); &\node {Stephenson with $\zeta = 2$}; \\
        \draw[mygreen, thick] (0,0) -- (0.3,0); &\node {Stephenson with $\zeta = 6$}; \\
        \draw[myblue, thick] (0,0) -- (0.3,0); &\node {Stephenson with $\zeta = 10$}; \\
        \draw[mycyan, thick] (0,0) -- (0.3,0); &\node {Stephenson with $\zeta = 30$}; \\
        \draw[myred, thick, dashed] (0,0) -- (0.3,0); &\node {Combined}; \\
        };
    \end{tikzpicture}
    };
    \end{tikzpicture}
    \vspace{1.8cm}
    \end{subfigure}

    \caption{
    Lower confidence bounds for treatment effect quantiles in the education experiment under the CRE.
    Solid lines denote rank sum statistics with Stephenson rank transformations ($\zeta = 2, 6, 10, 30$), and the dashed line denotes the combined method in Corollary \ref{cor:cre_min_pval}.
    }
    \label{fig:cre_elec}
\end{figure}

We present the $90\%$ simultaneous lower confidence bounds for all effect
quantiles $\tau_{(k)}$s from each method in Figure
\ref{fig:cre_elec}. The figure shows that methods based on a single
Stephenson rank statistic vary in performance across quantiles, whereas our
combined method consistently performs close to the best individual method. The performance of individual methods differs substantially across choices of the
tuning parameter $\zeta$ in both the number of informative quantiles and the tightness of lower confidence limits. For the
method of \citet{ZL24quantile}, the intervals are uninformative (i.e. have lower limits at $-\infty$) for $k \leq 133, 66, 45$ and $20$
when $\zeta = 2, 6, 10,$ and $30$, respectively. Our combined method is uninformative only for $k \leq 23$, close to the best choice ($\zeta = 30$) and substantially better than the others. Furthermore, its limits are nearly identical to those from the best $\zeta$ at each informative quantile.

What is the causal effect of the training? The number of lower confidence limits for $\tau_{(k)}$s exceeding zero indicates how many teachers' effects can be confidently bounded away from zero---equivalently, how many benefited from the professional development. Figure \ref{fig:cre_elec}(a) shows that this conclusion depends crucially on the test statistic. With $\zeta = 6$ or $\zeta = 10$, the $90\%$ lower bounds imply that at least $52\%$ or $50\%$ of teachers benefited, respectively. With $\zeta = 2$ or $\zeta = 30$, however, the same procedure implies only $34\%$ or $21\%$ benefited. Our combined method 
implies 
at least $50\%$ benefited---matching the most informative individual choices ($\zeta = 6$ and $10$) without requiring the analyst to select $\zeta$ in advance. The combined procedure avoids the sensitivity of single-rank methods to the tuning parameter and adapts to the most informative rank transformation.

\subsection{Analyses under a SRE} \label{sec:sredata}

We revisit the dataset analyzed in Section \ref{sec:credata}, but now treat it as an SRE, with strata corresponding to research sites across different U.S. states.\footnote{Randomization was conducted within research sites. There are $7$ strata, with sizes ranging from $15$ to $54$, and the proportion of treated units within each stratum varies from $60\%$ to $81\%$. For further details, see \citet{HSMHSD10}.} As a benchmark, we first apply the method in \citet{ZL24quantile} based on a single stratified rank sum statistic in \eqref{eq:rank_sum_stratified_normalized}, 
using a polynomial rank transformation with $\zeta = 2$, $6$, and $10$, respectively, for normalized ranks across all strata.
We then apply our proposed approaches combining these rank statistics as detailed in Sections \ref{sec:comb1} and \ref{sec:comb2}, where we consider both integer programming and linear programming for exactly or conservatively calculating the $p$-values. 
For the weighting scheme, we adopt the choice in \eqref{eq:weighting_van_opt}, given its superior performance in the simulation studies reported in the supplementary material.

\begin{figure}[htbp]
    \centering
    \begin{tikzpicture}
    \node[draw, rounded corners, inner sep = 2pt] (legend){
    \begin{tikzpicture}[font = \tiny]
        \matrix[column sep = 1mm]{
        \draw[black, thick] (0,0) -- (0.3,0); &\node {Polynomial with $\zeta = 2$}; &
        \draw[mygreen, thick] (0,0) -- (0.3,0); &\node {Polynomial with $\zeta = 6$}; &
        \draw[myblue, thick] (0,0) -- (0.3,0); &\node {Polynomial with $\zeta = 10$};  &\draw[mypurple, thick, dashed] (0,0) -- (0.3,0); &\node {Comb 1}; 
        &\draw[myred, thick, dashed] (0,0) -- (0.3,0); 
        &\node {Comb 2}; \\
        };
    \end{tikzpicture}
    };
    \end{tikzpicture}
    
    \begin{subfigure}[b]{0.33\linewidth}
    \includegraphics[width=\linewidth]{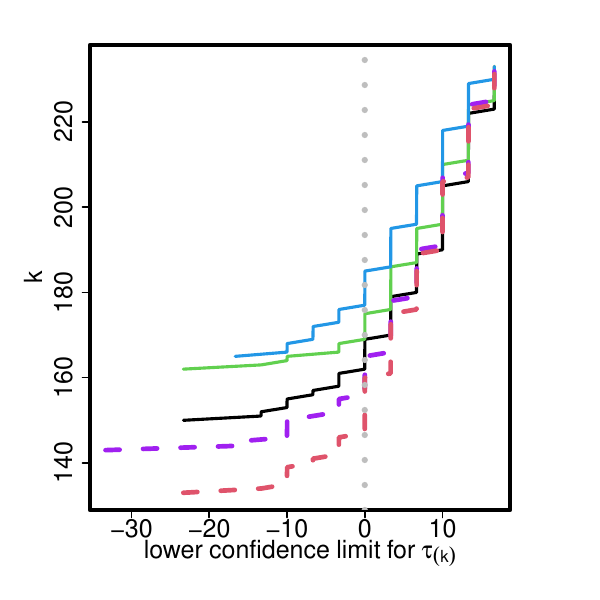}
    \caption{\scriptsize $90\%$ bounds for $\tau_{(k)}$s
    }
    \end{subfigure}%
    \begin{subfigure}[b]{0.33\linewidth}
    \includegraphics[width=\linewidth]{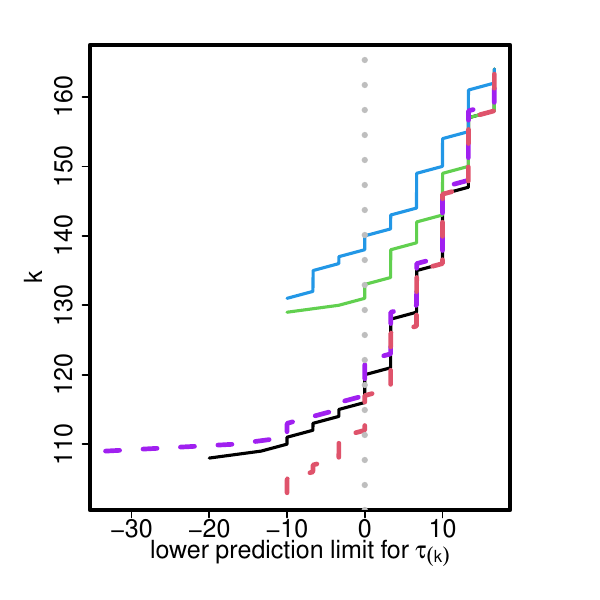}
    \caption{\scriptsize $95\%$ bounds for $\tau_{1(k)}$s 
    }
    \end{subfigure}%
    \begin{subfigure}[b]{0.33\linewidth}
    \includegraphics[width=\linewidth]{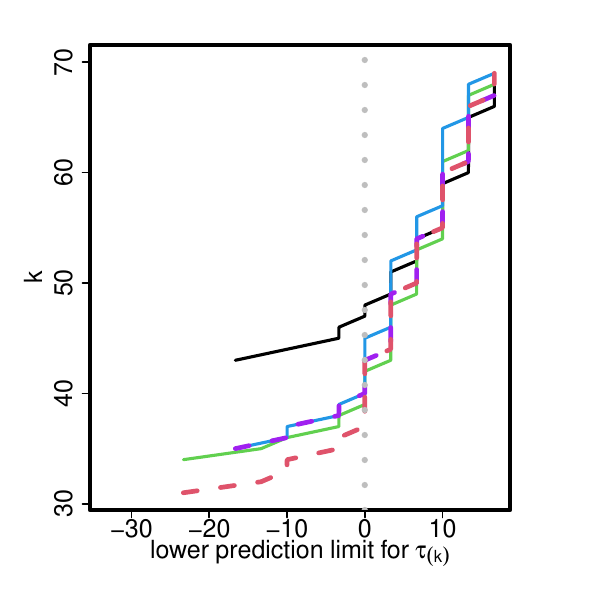}
    \caption{\scriptsize $95\%$ bounds for $\tau_{0(k)}$s 
    }
    \end{subfigure}
    \begin{subfigure}[b]{0.33\linewidth}
    \includegraphics[width=\linewidth]{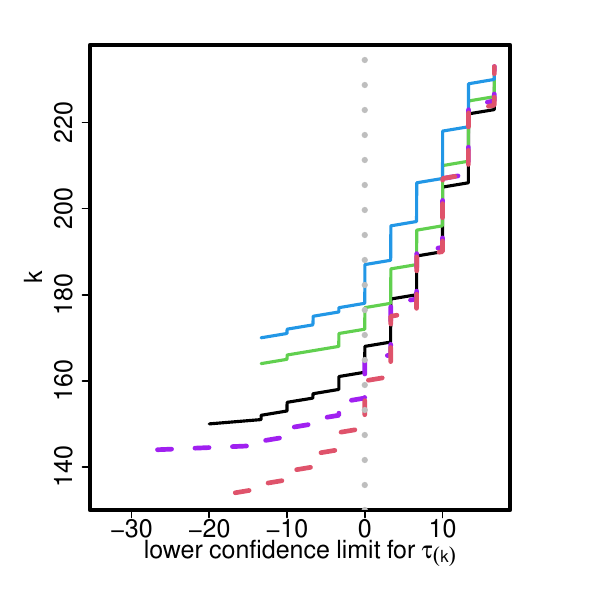}
    \caption{\scriptsize $90\%$ bounds for $\tau_{(k)}$s 
    }
    \end{subfigure}%
    \begin{subfigure}[b]{0.33\linewidth}
    \includegraphics[width=\linewidth]{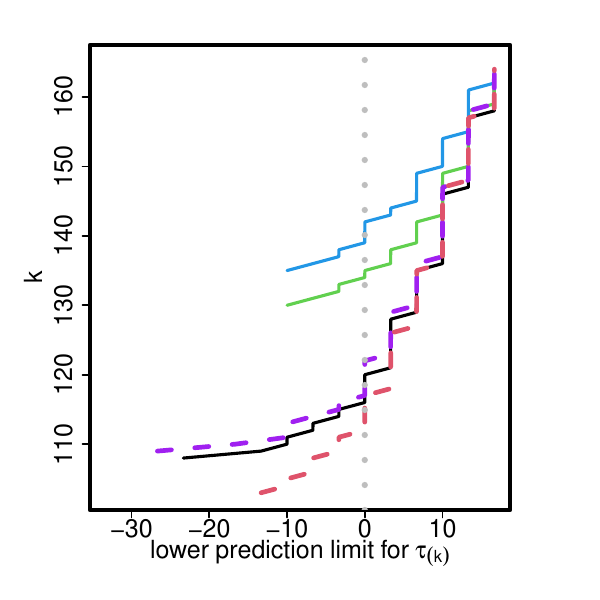}
    \caption{\scriptsize $95\%$ bounds for $\tau_{1(k)}$s 
    }
    \end{subfigure}%
    \begin{subfigure}[b]{0.33\linewidth}
    \includegraphics[width=\linewidth]{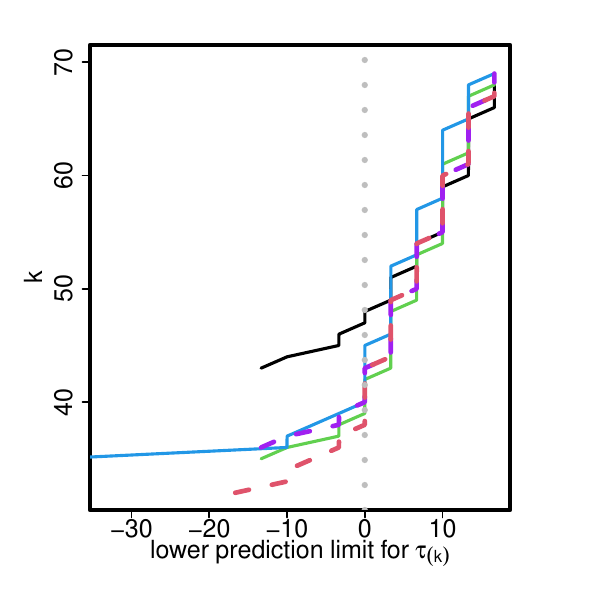}
    \caption{\scriptsize $95\%$ bounds for $\tau_{0(k)}$s 
    }
    \end{subfigure}

    \caption{Confidence and prediction bounds for treatment effect quantiles in  education experiment under the SRE. Panels (b) and (c) show the $95\%$ simultaneous lower prediction bounds for treated and control units using integer linear programming (exact $p$-values);
    panel (a) shows the corresponding $90\%$ simultaneous lower confidence bounds among all units obtained by combining (b) and (c). 
    Panels (d)--(f) are analogous but use linear programming (conservative $p$-values).
    Solid lines denote single stratified rank statistic with polynomial rank transformation ($\zeta$ = $2$, $6$, $10$); dashed lines denote the two combination methods (Comb 1: Section \ref{sec:comb1}; Comb 2: Section \ref{sec:comb2}).
    }
    \label{fig:sre_elec}
\end{figure}

Figure \ref{fig:sre_elec} shows the $90\%$ lower confidence bounds for quantiles of individual treatment effects across all units, along with the associated $95\%$ lower prediction bounds for quantiles of individual treatment effects among treated and control units, 
computed using both integer and relaxed linear programming. 
Similar to the analysis under CRE in Section \ref{sec:credata}, our two combined methods perform close to the best individual method in both the number of informative quantiles and tightness of the resulting bounds. 
How many teachers benefited from the professional development course?
From Figure \ref{fig:sre_elec}(a), the $90\%$ lower bounds from the combined methods imply that at least $31\%$ and $34\%$ of teachers benefited, respectively. 
By contrast, individual methods using polynomial rank transformations with $\zeta = 2$, $6$, and $10$ imply only $28\%$, $27\%$, and $23\%$ of teachers benefited, respectively.

Moreover, as illustrated in Figure \ref{fig:sre_elec}(a), the combined method outperforms all individual methods for confidence bounds on all units when $k$ is up to about $180$. This improvement arises because inference for treated and control units may favor different rank transformations. 
For example, Figure \ref{fig:sre_elec}(b) suggests that for prediction bounds on $\tau_{\treat(k)}$s among treated units, smaller tuning parameter ($\zeta = 2$) generally performs better, whereas Figure \ref{fig:sre_elec}(c) indicates that for prediction bounds on $\tau_{\control(k)}$s among control units, larger $\zeta$ ($\zeta = 6$ or $10$) yields more informative bounds when $k \leq 50$. 
As a result, after aggregating the prediction bounds from the treated and control groups, the combined method integrates the strengths of different rank transformations and is more informative than any individual method for $k$ up to about $180$.
Moreover, the combined method in Section \ref{sec:comb2} generally outperforms that in Section \ref{sec:comb1} across all panels of \ref{fig:sre_elec}, especially for lower quantiles. This aligns with our simulation results reported in the supplementary material.

Finally, comparing integer linear programming (Figure \ref{fig:sre_elec}(a)-(c)) and linear programming (Figure \ref{fig:sre_elec}(d)-(f)) shows that relaxation leads to only minor losses. In practice, this suggests that linear programming can be recommended for large-scale problems, as it offers faster computation with only a small loss in information.

\section{Conclusion and Discussion}\label{sec:discussion}

How many people actually benefited? This question motivates much of program evaluation. An average treatment effect
cannot distinguish a world where everyone gains modestly from one where a few
gain dramatically and others not at all. Yet these worlds demand different
policy responses.

We have developed methods that let researchers characterize the distribution of
individual treatment effects without committing in advance to a single rank-based test statistic. The key insight is that different rank statistics excel at
detecting effects in different parts of the distribution and under different data-generating processes: some reveal dramatic responders, others detect widespread modest gains. By combining
multiple rank statistics while maintaining finite-sample validity, our approach
adapts to whichever pattern the data contain. The researcher need not guess
correctly beforehand.

The practical stakes are substantial. In the teacher training experiment
analyzed as a completely randomized experiment for illustration of the method,
an unlucky choice of tuning parameter would suggest only 21\% of teachers
benefited; a luckier choice would suggest 52\%. This is not a minor
methodological nuance---it is the difference between a program that helps a
small minority and one that helps a majority. Our combined test, requiring no
such choice, yields bounds implying roughly half the teachers benefited, close
to the most informative individual method. The procedure
extracts nearly the full information available without demanding prescience
from the analyst.

We focused on lower bounds for effect quantiles, constructed by inverting
one-sided tests. Although not shown, upper bounds
follow by reversing outcome signs, and two-sided intervals could arise from
combining lower and upper bounds. The methods we developed here apply to both
completely randomized and stratified randomized experiments; for stratified
designs, we developed normalized rank transformations and principled weighting
schemes that prevent large strata from overwhelming small ones, and we showed
that linear programming relaxations sacrifice little precision while
substantially reducing computation.

Several directions remain open. The optimal choice of which rank statistics to
combine, and with what weights, likely depends on features of the effect
distribution that are unknown in practice; theoretical guidance here would be
valuable. Our methods assume the experiment was randomized; extending them to
observational studies with potentially confounded treatment assignment would
broaden their applicability \citep{Rosenbaum02a, SL22quantile, WL23,
ChenLiZhang2024}. Settings with missing or censored outcomes present additional
challenges \citep{li2021randomization, Zhaomissing24, li2025randomization,
Heng29082025}. We leave these extensions to future work.

\bibliographystyle{apalike}
\bibliography{reference.bib}

\begin{thebibliography}{}

\bibitem[Athey and Wager, 2021]{athey2021observational}
Athey, S. and Wager, S. (2021).
\newblock Policy learning with observational data.
\newblock {\em Econometrica}, 89(1):133--161.

\bibitem[Bickel and Freedman, 1984]{Bickel1984}
Bickel, P.~J. and Freedman, D.~A. (1984).
\newblock {Asymptotic Normality and the Bootstrap in Stratified Sampling}.
\newblock {\em The Annals of Statistics}, 12:470 -- 482.

\bibitem[Caughey et~al., 2023]{CDLM21quantile}
Caughey, D., Dafoe, A., Li, X., and Miratrix, L. (2023).
\newblock Randomization inference beyond the sharp null: Bounded null
  hypotheses and quantiles of individual treatment effects.
\newblock {\em Journal of the Royal Statistical Society, Series B (Statistical
  Methodology)}, 85:1471--1491.

\bibitem[Chen and Li, 2026]{ZL24quantile}
Chen, Z. and Li, X. (2026).
\newblock Enhanced inference for distributions and quantiles of individual
  treatment effects in various experiments.
\newblock {\em Journal of the American Statistical Association}, page inpress.

\bibitem[Chen et~al., 2024]{ChenLiZhang2024}
Chen, Z., Li, X., and Zhang, B. (2024).
\newblock The role of randomization inference in unraveling individual
  treatment effects in early phase vaccine trials.
\newblock {\em Statistical Communications in Infectious Diseases}, 16:20240001.

\bibitem[Fisher, 1935]{Fisher:1935}
Fisher, R.~A. (1935).
\newblock {\em The {D}esign of {E}xperiments, 1st Edition}.
\newblock Edinburgh, London: Oliver and Boyd.

\bibitem[H{\'a}jek, 1960]{hajek1960limiting}
H{\'a}jek, J. (1960).
\newblock Limiting distributions in simple random sampling from a finite
  population.
\newblock {\em Publications of the Mathematics Institute of the Hungarian
  Academy of Science}, 5:361--374.

\bibitem[Heckman et~al., 1997]{heckman1997}
Heckman, J.~J., Smith, J., and Clements, N. (1997).
\newblock Making the most out of programme evaluations and social experiments:
  Accounting for heterogeneity in programme impaces.
\newblock {\em The Review of Economic Studies}, 64(4):487--535.

\bibitem[Heller et~al., 2010]{HSMHSD10}
Heller, J.~L., Shinohara, M., Miratrix, L., Hesketh, S.~R., and Daehler, K.~R.
  (2010).
\newblock Learning science for teaching: Effects of professional development on
  elementary teachers, classrooms, and students.
\newblock {\em Proceedings from Society for Research on Educational
  Effectiveness.}

\bibitem[Heng et~al., 2025]{Heng29082025}
Heng, S., Zhang, J., and Feng, Y. (2025).
\newblock Design-based causal inference with missing outcomes: Missingness
  mechanisms, imputation-assisted randomization tests, and covariate
  adjustment.
\newblock {\em Journal of the American Statistical Association}, in press.

\bibitem[Imai and Ratkovic, 2013]{imai2013}
Imai, K. and Ratkovic, M. (2013).
\newblock Estimating treatment effect heterogeneity in randomized program
  evaluation.
\newblock {\em The Annals of Applied Statistics}, 7(1):443--470.

\bibitem[Koenker, 2017]{koenker2017quantile}
Koenker, R. (2017).
\newblock Quantile regression: 40 years on.
\newblock {\em Annual review of economics}, 9(1):155--176.

\bibitem[Li and Ding, 2017]{LD2017}
Li, X. and Ding, P. (2017).
\newblock General forms of finite population central limit theorems with
  applications to causal inference.
\newblock {\em Journal of the American Statistical Association},
  112:1759--1769.

\bibitem[Li et~al., 2025]{li2025randomization}
Li, X., Sheng, P., and Yu, Z. (2025).
\newblock Randomization inference with sample attrition.
\newblock {\em arXiv preprint arXiv:2507.00795}.

\bibitem[Li and Small, 2022]{li2021randomization}
Li, X. and Small, D.~S. (2022).
\newblock Randomization-based test for censored outcomes: A new look at the
  logrank test.
\newblock {\em Statistical Science}, page To appear.

\bibitem[Liu and Yang, 2020]{LY2020stratifiedadj}
Liu, H. and Yang, Y. (2020).
\newblock {Regression-adjusted average treatment effect estimates in stratified
  randomized experiments}.
\newblock {\em Biometrika}, 107:935--948.

\bibitem[Manski, 2004]{manski2004}
Manski, C.~F. (2004).
\newblock Statistical treatment rules for heterogeneous populations.
\newblock {\em Econometrica}, 72(4):1221--1246.

\bibitem[Neyman, 1923]{Neyman:1923}
Neyman, J. (1923).
\newblock {On the application of probability theory to agricultural
  experiments. Essay on principles (with discussion). Section 9 (translated).
  reprinted ed.}
\newblock {\em Statistical Science}, 5:465--472.

\bibitem[Nie and Wager, 2021]{NieWager2020}
Nie, X. and Wager, S. (2021).
\newblock {Quasi-oracle estimation of heterogeneous treatment effects}.
\newblock {\em Biometrika}, 108(2):299--319.

\bibitem[Puri, 1965]{puri1965combination}
Puri, M.~L. (1965).
\newblock On the combination of independent two sample tests of a general
  class.
\newblock {\em Revue de l'Institut International de Statistique}, pages
  229--241.

\bibitem[Qu et~al., 2025]{qu2024randomization}
Qu, T., Du, J., and Li, X. (2025).
\newblock Randomization-based z-estimation for evaluating average and
  individual treatment effects.
\newblock {\em Biometrika}, 112(2):1--9.

\bibitem[Rosenbaum, 2002]{Rosenbaum02a}
Rosenbaum, P.~R. (2002).
\newblock {\em Observational Studies}.
\newblock Springer, New York, 2 edition.

\bibitem[Rosenbaum, 2007]{Rosenbaum2007dramatic}
Rosenbaum, P.~R. (2007).
\newblock Confidence intervals for uncommon but dramatic responses to
  treatment.
\newblock {\em Biometrics}, 63:1164--1171.

\bibitem[Rosenbaum and Silber, 2008]{rosenbaum2008aberrant}
Rosenbaum, P.~R. and Silber, J.~H. (2008).
\newblock Aberrant effects of treatment.
\newblock {\em Journal of the American Statistical Association},
  103(481):240--247.

\bibitem[Rubin, 1974]{Rubin:1974}
Rubin, D.~B. (1974).
\newblock Estimating causal effects of treatments in randomized and
  nonrandomized studies.
\newblock {\em Journal of Educational Psychology}, 66:688--701.

\bibitem[Shi and Li, 2024]{shi2024some}
Shi, L. and Li, X. (2024).
\newblock Some theoretical foundations for the design and analysis of
  randomized experiments.
\newblock {\em Journal of Causal Inference}, 12(1).

\bibitem[Stephenson and Ghosh, 1985]{Stephenson85}
Stephenson, W.~R. and Ghosh, M. (1985).
\newblock Two sample nonparametric tests based on subsamples.
\newblock {\em Communications in Statistics - Theory and Methods},
  14:1669--1684.

\bibitem[Su and Li, 2024]{SL22quantile}
Su, Y. and Li, X. (2024).
\newblock {Treatment effect quantiles in stratified randomized experiments and
  matched observational studies}.
\newblock {\em Biometrika}, 111(1):235--254.

\bibitem[Tian et~al., 2014]{tian2014}
Tian, L., Alizadeh, A.~A., Gentles, A.~J., and Tibshirani, R. (2014).
\newblock A simple method for estimating interactions between a treatment and a
  large number of covariates.
\newblock {\em Journal of the American Statistical Association},
  109:1517--1532.

\bibitem[van Elteren, 1960]{van1960combination}
van Elteren, P.~H. (1960).
\newblock On the combination of independent two sample tests of wilcoxon.
\newblock {\em Bulletin of the Institute of International Statistics},
  37:351--361.

\bibitem[Wu and Li, 2025]{WL23}
Wu, D. and Li, X. (2025).
\newblock Sensitivity analysis for quantiles of hidden biases in matched
  observational studies.
\newblock {\em Journal of the American Statistical Association},
  120:1657--1668.

\bibitem[Zhang et~al., 2012]{zhang2012}
Zhang, B., Tsiatis, A.~A., Davidian, M., Zhang, M., and Laber, E. (2012).
\newblock Estimating optimal treatment regimes from a classification
  perspective.
\newblock {\em Stat}, 1(1):103--114.

\bibitem[Zhao et~al., 2024]{Zhaomissing24}
Zhao, A., Ding, P., and Li, F. (2024).
\newblock Covariate adjustment in randomized experiments with missing outcomes
  and covariates.
\newblock {\em Biometrika}, 111:1413--1420.

\end{thebibliography}

\newpage
\begin{center}
\textbf{\Large Supplemental Material to ``Randomization Tests for Distributions of Individual Treatment Effects using Multiple Rank Statistics''}
\end{center}
\setcounter{equation}{0}
\setcounter{theorem}{0}
\setcounter{corollary}{0}
\setcounter{proposition}{0}
\setcounter{figure}{0}
\setcounter{table}{0}
\setcounter{page}{1}
\setcounter{section}{0}

\makeatletter

\renewcommand{\theproposition}{A\arabic{proposition}}
\renewcommand{\thetheorem}{A\arabic{theorem}}
\renewcommand{\thecorollary}{A\arabic{corollary}}

\renewcommand{\thepage}{A\arabic{page}}
\renewcommand{\thesection}{A\arabic{section}}
\renewcommand{\theequation}{S\arabic{equation}}
\renewcommand{\thefigure}{A\arabic{figure}}
\renewcommand{\bibnumfmt}[1]{[S#1]}
\renewcommand{\citenumfont}[1]{S#1}

\section{Simulation studies}\label{sec:numerical}

\subsection{Simulation under the CRE}\label{subsec:cre_sim}

In this subsection, we conduct simulation to evaluate the performance of our method under the CRE. 
We consider a CRE with sample size $n =  100$, where $n/2$ units are randomly assigned to treatment and control. 
We generate potential outcomes from the following model:
\begin{align}\label{eq:cresimmodel}
  Y_i(0) \stackrel{\text{i.i.d.}}{\sim} \mathcal{N}(0, 1), \ Y_i(1) \stackrel{\text{i.i.d.}}{\sim} \mathcal{N}(2, \sigma^2), \ \ \text{where} \ \sigma \in \{ 0.2, \  0.5, \  1, \ 2, \  5 \}.
\end{align}
In other words, every unit benefits from treatment by 2 points on average, but the spread of treatment outcomes varies. When $\sigma < 1$, treated units cluster tightly around 2, indicating compressed variation under treatment. When $\sigma > 1$, treatment amplifies differences, with some units benefiting substantially and others only modestly. 
These scenarios assess whether the combined method adapts to different patterns of heterogeneity.

For each simulation iteration, we generate new potential outcomes from \eqref{eq:cresimmodel}, together with a new treatment assignment vector.
In this way, we will investigate the average performance of different approaches when data are generated from \eqref{eq:cresimmodel}. 
We compare existing approaches, which use a single Stephenson rank sum statistic with $\zeta = 2, 6, 10, 30$, to our approach which combines these four rank sum statistics as discussed in Corollary \ref{cor:cre_min_pval}.

\begin{figure}[htb]
    
    \centering
    \begin{subfigure}[b]{0.32\linewidth}
    \includegraphics[width=\linewidth]{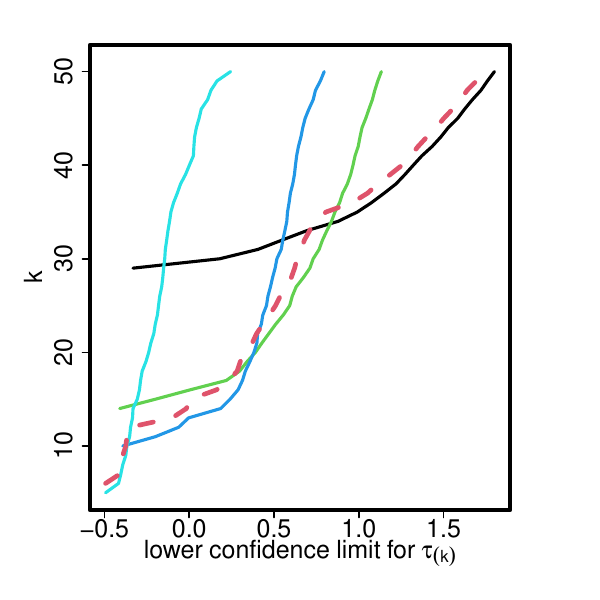}
    \caption{\small $\sigma = 0.2$}
    \end{subfigure}
    \begin{subfigure}[b]{0.32\linewidth}
    \includegraphics[width=\linewidth]{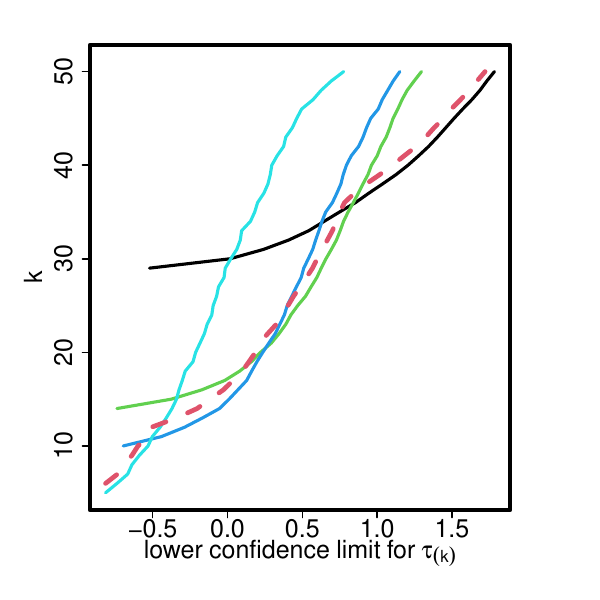}
    \caption{\small $\sigma = 0.5$}
    \end{subfigure}
    \begin{subfigure}[b]{0.32\linewidth}
    \includegraphics[width=\linewidth]{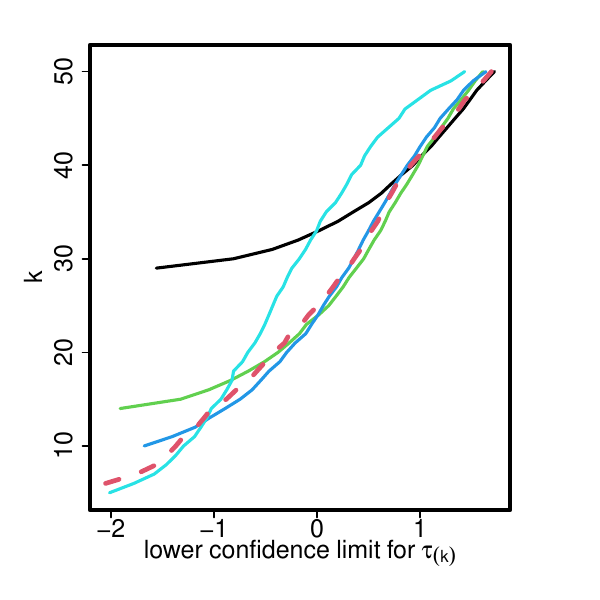}
    \caption{\small $\sigma = 1$}
    \end{subfigure}
    
    \begin{subfigure}[b]{0.32\linewidth}
    \includegraphics[width=\linewidth]{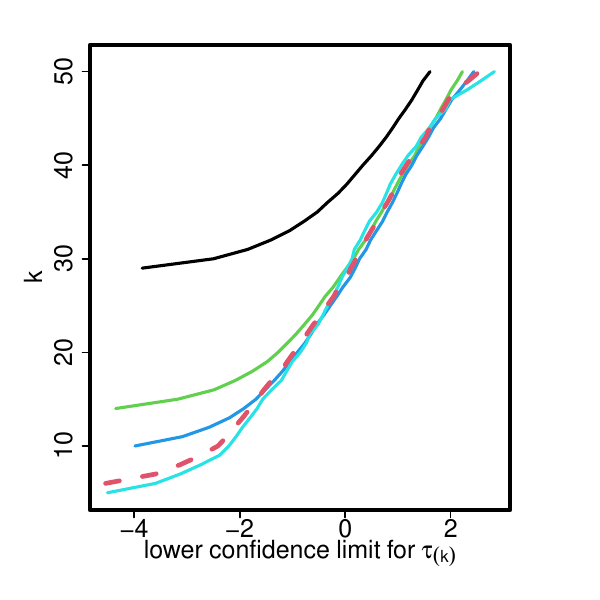}
    \caption{\small $\sigma = 2$}
    \end{subfigure}
    \begin{subfigure}[b]{0.32\linewidth}
    \includegraphics[width=\linewidth]{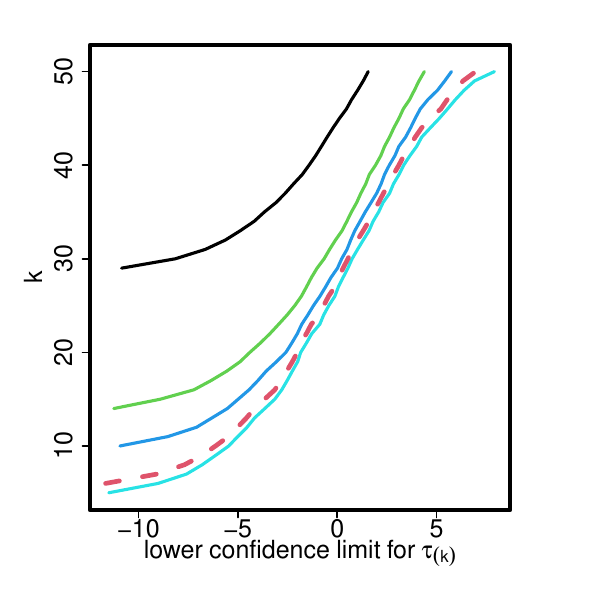}
    \caption{\small $\sigma = 5$}
    \end{subfigure} 
    \begin{subfigure}[b]{0.32\linewidth}
    \centering
    \begin{tikzpicture}
    \node[draw, rounded corners, inner sep = 2pt] {
    \begin{tikzpicture}[font = \footnotesize]
        \matrix[column sep = 1mm]{
        \draw[black, thick] (0,0) -- (0.5,0); &\node {Stephenson with $\zeta = 2$}; \\
        \draw[mygreen, thick] (0,0) -- (0.5,0); &\node {Stephenson with $\zeta = 6$}; \\
        \draw[myblue, thick] (0,0) -- (0.5,0); &\node {Stephenson with $\zeta = 10$}; \\
        \draw[mycyan, thick] (0,0) -- (0.5,0); &\node {Stephenson with $\zeta =30$}; \\
        \draw[myred, thick, dashed] (0,0) -- (0.5,0); &\node {Combined}; \\
        };
    \end{tikzpicture}
    };
    \end{tikzpicture}
    \vspace{2.25cm}
    \end{subfigure}

    \caption{Median of $90\%$ lower prediction limits of $\tau_{\treat(k)}$s over $100$ simulation replications. 
    Each solid line corresponds to results from a single Stephenson rank statistic, where the parameter for rank transformation $\zeta = 2$, $6$, $10$ and $30$, respectively. 
    The dashed line corresponds to results from our combined method proposed in Corollary \ref{cor:cre_min_pval}. 
    }
    \label{fig:cresimplot}
\end{figure}

Figure \ref{fig:cresimplot} reports the medians of the $90\%$ simultaneous lower prediction bounds for each quantile of individual treatment effects among the treated units computed over $100$ simulations, for various values of $\sigma$ in the data-generating model \eqref{eq:cresimmodel}. 
We omit the $-\infty$ lower prediction bounds for lower quantiles in the plots, and we follow this convention in all subsequent figures.
Figure \ref{fig:cresimplot} shows that the performance of the existing approach based on a single Stephenson rank sum statistic varies considerably across data-generating processes and the quantiles of interest, with different choices of $\zeta$ preferable in different scenarios. Specifically, when the treatment potential outcomes have lighter tails than the control potential outcomes (i.e., $\sigma < 1$) and larger quantiles of treatment effects are of interest, methods with smaller values of $\zeta$ yield more informative lower bounds. In contrast, when $Y_i(1)$ has heavier tails than $Y_i(0)$ (i.e., $\sigma \geq 1$), larger choices of $\zeta$ tend to perform better.

By comparison, our proposed method delivers results that are consistently close to those of the best individual rank statistic across all settings.  
For example, when $\sigma = 0.2$, the optimal choice of $\zeta$ increases as one moves from higher to lower quantiles; the combined method closely tracks this optimal choice throughout. Likewise, when $\sigma = 5$, the combined method performs nearly as well as the best individual method with $\zeta = 30$, despite incorporating the poorly performing  $\zeta = 2$ in the combination.
Overall, the combined test adapts effectively to the best performing statistic and remains robust even when some individual statistics perform poorly.

\subsection{Simulation under the SRE}\label{subsec:sre_sim}

In this subsection, we conduct simulations under the SRE to evaluate performance of methods based on the stratified rank sum statistic with the polynomial rank transformation in \eqref{eq:polynomial}, and to compare them with the proposed tests using combination methods in Sections \ref{sec:comb1} and \ref{sec:comb2}. We consider an SRE with total sample size $N = 1000$. Potential outcomes are generated from the following model:
\begin{align*}
    Y_i(0) \stackrel{\text{i.i.d.}}{\sim} \mathcal{N}(0, 1), \ Y_i(1) \stackrel{\text{i.i.d.}}{\sim} \mathcal{N}(2, \sigma^2), \ \ \text{where} \ \sigma \in \{ 0.2, \  1, \ 5 \}. 
\end{align*}
We examine three stratification scenarios:
\begin{enumerate}
\item S1 (large strata with many units): $S = 10$ strata, each with $n_s = 100$ units;
\item S2 (small strata with fewer units): $S = 100$ strata, each with $n_s = 10$ units;
\item S3 (mixed strata): $S = 56$ strata, including $33$ strata with $10$ units each, $16$ strata with $20$ units each, and $7$ strata with $50$ units each.
\end{enumerate}
Within each stratum, half of the units are randomly assigned to treatment.

In each simulation iteration, we again generate a new set of potential outcomes and a new treatment assignment vector. We then construct simultaneous prediction intervals for quantiles of the individual treatment effect among the treated units, using the weighting schemes \eqref{eq:weighting_van_opt} and \eqref{eq:weighting_van_free}. 
Specifically, we first apply a single stratified rank sum statistic using polynomial rank transformation with $\zeta = 2$, $6$ or $10$ for all strata, 
and then apply the proposed approaches in Sections \ref{sec:comb1} and \ref{sec:comb2}, combining these three polynomial rank transformations.
We evaluate their performance by comparing the median of the $90\%$ lower prediction limits across $100$ simulation replications.

\begin{figure}[htbp]
    \centering
    \begin{tikzpicture}
    \node[draw, rounded corners, inner sep = 2pt] (legend){
    \begin{tikzpicture}[font = \scriptsize]
        \matrix[column sep = 1mm]{
        \draw[black, thick] (0,0) -- (0.3,0); &\node {Polynomial with $\zeta = 2$}; &
        \draw[mygreen, thick] (0,0) -- (0.3,0); &\node {Polynomial with $\zeta = 6$}; &
        \draw[myblue, thick] (0,0) -- (0.3,0); &\node {Polynomial with $\zeta = 10$};
        & \draw[mypurple, thick, dashed] (0,0) -- (0.3,0); &\node {Comb 1}; & \draw[myred, thick, dashed] (0,0) -- (0.3,0); &\node {Comb 2}; \\
        };
    \end{tikzpicture}
    };
    \end{tikzpicture}
    \begin{subfigure}[b]{0.32\linewidth}
    \centering
    \includegraphics[width=\linewidth]{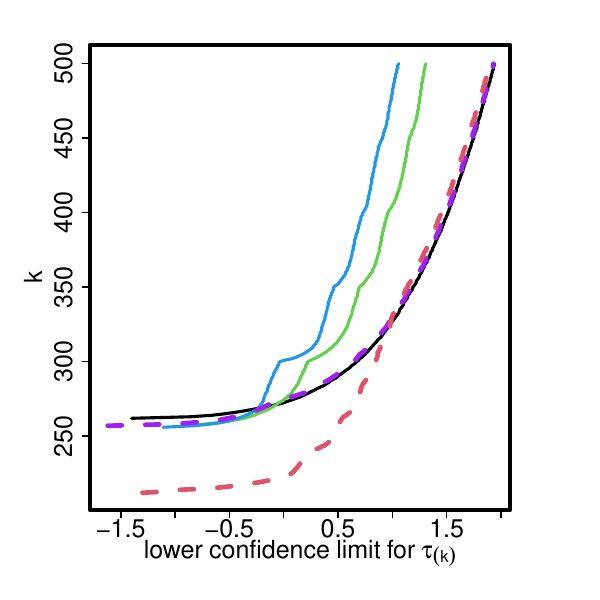}
    \caption{\small S1 and $\sigma = 0.2$
    }
    \end{subfigure} 
    \begin{subfigure}[b]{0.32\linewidth}
    \centering
    \includegraphics[width=\linewidth]{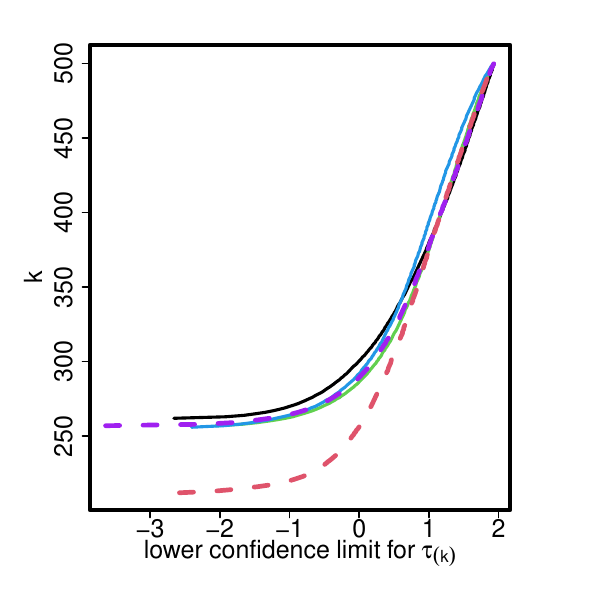}
    \caption{\small S1 and $\sigma = 1$
    }
    \end{subfigure} 
    \begin{subfigure}[b]{0.32\linewidth}
    \centering
    \includegraphics[width=\linewidth]{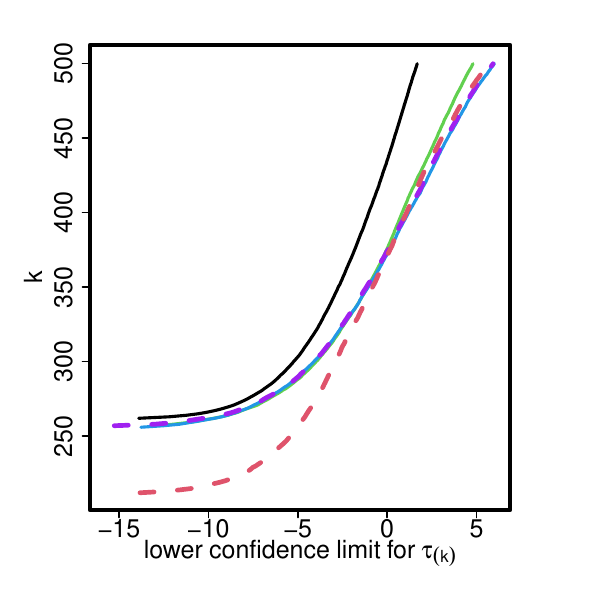}
    \caption{\small S1 and $\sigma = 5$
    }
    \end{subfigure} 

    \begin{subfigure}[b]{0.32\linewidth}
    \centering
    \includegraphics[width=\linewidth]{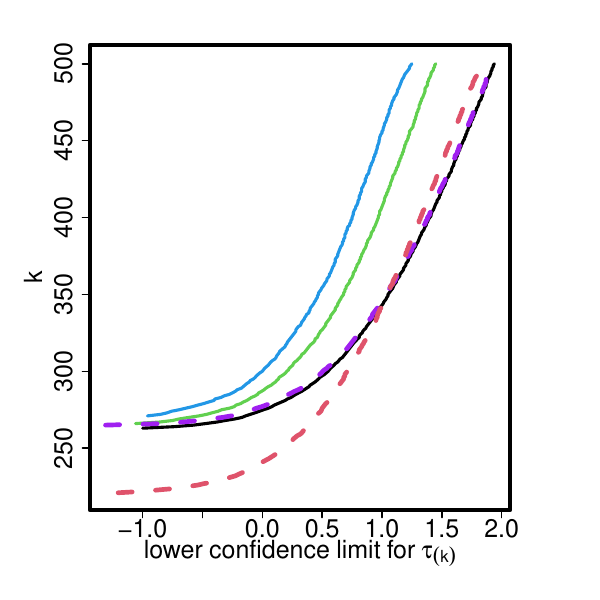}
    \caption{\small S2 and $\sigma = 0.2$
    }
    \end{subfigure} 
    \begin{subfigure}[b]{0.32\linewidth}
    \centering
    \includegraphics[width=\linewidth]{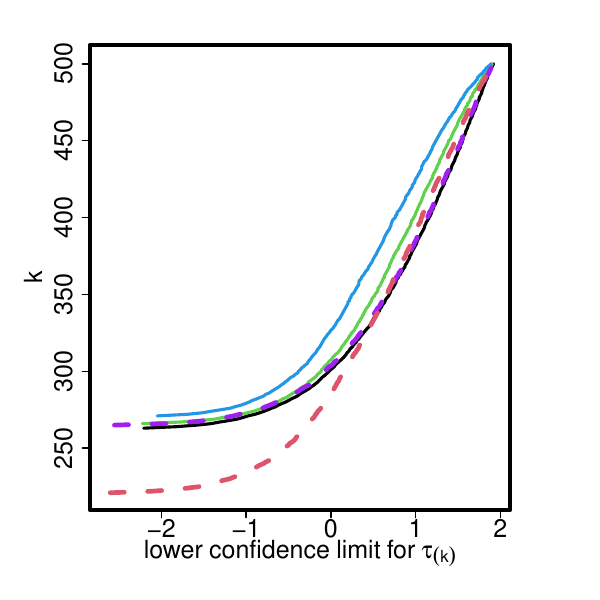}
    \caption{\small S2 and $\sigma = 1$}
    \end{subfigure} 
    \begin{subfigure}[b]{0.32\linewidth}
    \centering
    \includegraphics[width=\linewidth]{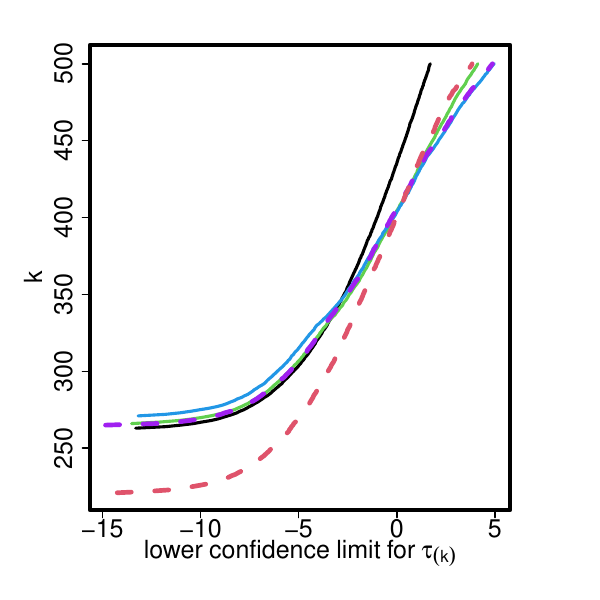}
    \caption{\small S2 and $\sigma = 5$}
    \end{subfigure} 
    
    \caption{Median of $90 \%$ lower prediction limits for treatment effect quantiles among treated units across $100$ simulation replications. 
    The first and second rows present results under stratification scenarios S1 and S2, respectively.
    Each solid line corresponds to a single stratified rank statistic using a polynomial rank transformation, with the rank transformation parameter $\zeta$ set to $2$, $6$, and $10$, respectively. 
    The dashed line corresponds to the two combined methods, where Comb 1 refers to the method in Section \ref{sec:comb1} and Comb 2 refers to the method in Section \ref{sec:comb2}.
    All results here use the weighting scheme in \eqref{eq:weighting_van_opt}, which is equivalent to the scheme in \eqref{eq:weighting_van_free} here.}
    \label{fig:sre_case1}
\end{figure}

We first summarize the simulation results under stratification scenarios S1 and S2 in Figure \ref{fig:sre_case1}. 
Under both S1 and S2, all strata contain the same numbers of treated and control units. Consequently, the two weighting schemes in \eqref{eq:weighting_van_opt} and \eqref{eq:weighting_van_free} coincide and assign equal weights to all strata.
Figure \ref{fig:sre_case1} shows that the performance of individual stratified rank statistics varies with the potential outcome distributions and the quantiles of interest.
The method that combines aggregated statistics in Section \ref{sec:comb1} performs consistently close to the optimal individual statistic across all settings.
Moreover, the method in Section \ref{sec:comb2}, which combines statistics within each stratum before aggregation, can perform even better, particularly for lower quantiles of treatment effects, with only slight losses for higher quantiles.

\begin{figure}[htbp]
    \centering
    \begin{tikzpicture}
    \node[draw, rounded corners, inner sep = 2pt] (legend){
    \begin{tikzpicture}[font = \scriptsize]
        \matrix[column sep = 1mm]{
        \draw[black, thick] (0,0) -- (0.3,0); &\node {Polynomial with $\zeta = 2$}; &
        \draw[mygreen, thick] (0,0) -- (0.3,0); &\node {Polynomial with $\zeta = 6$}; &
        \draw[myblue, thick] (0,0) -- (0.3,0); 
        &\node {Polynomial with $\zeta = 10$}; 
        &\draw[mypurple, thick, dashed] (0,0) -- (0.3,0); &\node {Comb 1}; 
        &\draw[myred, thick, dashed] (0,0) -- (0.3,0); &\node {Comb 2}; \\
        };
    \end{tikzpicture}
    };
    \end{tikzpicture}
    
    \begin{subfigure}[b]{0.33\linewidth}
    \includegraphics[width=\linewidth]{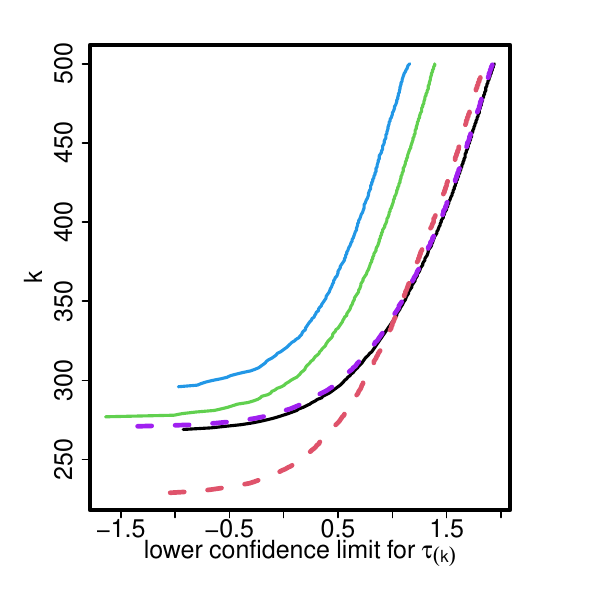}
    \caption{\footnotesize $\sigma = 0.2$, weights \eqref{eq:weighting_van_opt}}
    \end{subfigure}%
    \begin{subfigure}[b]{0.33\linewidth}
    \includegraphics[width=\linewidth]{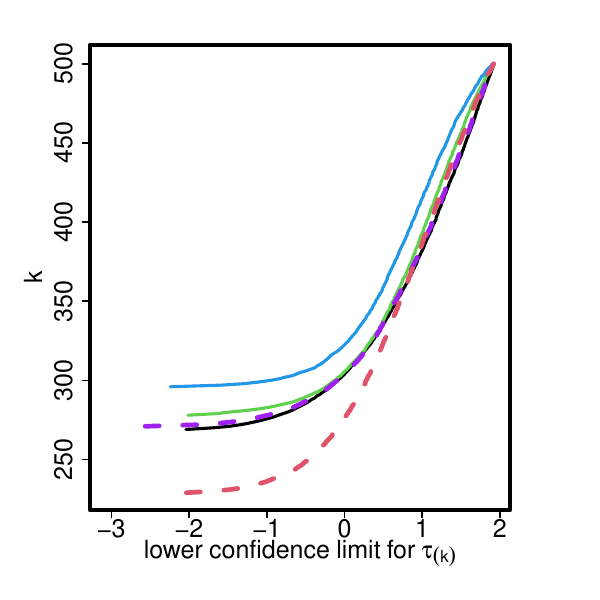}
    \caption{\footnotesize $\sigma = 1$, weights \eqref{eq:weighting_van_opt}}
    \end{subfigure}%
    \begin{subfigure}[b]{0.33\linewidth}
    \includegraphics[width=\linewidth]{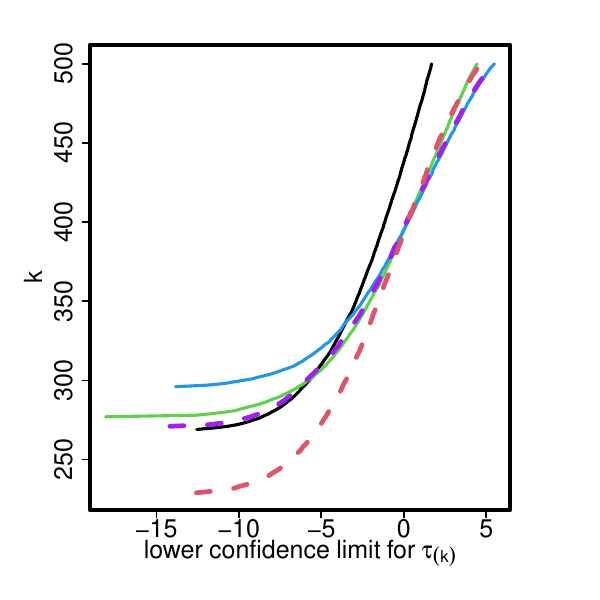}
    \caption{\footnotesize $\sigma = 5$, weights \eqref{eq:weighting_van_opt}}
    \end{subfigure} 
    
    \begin{subfigure}[b]{0.33\linewidth}
    \includegraphics[width=\linewidth]{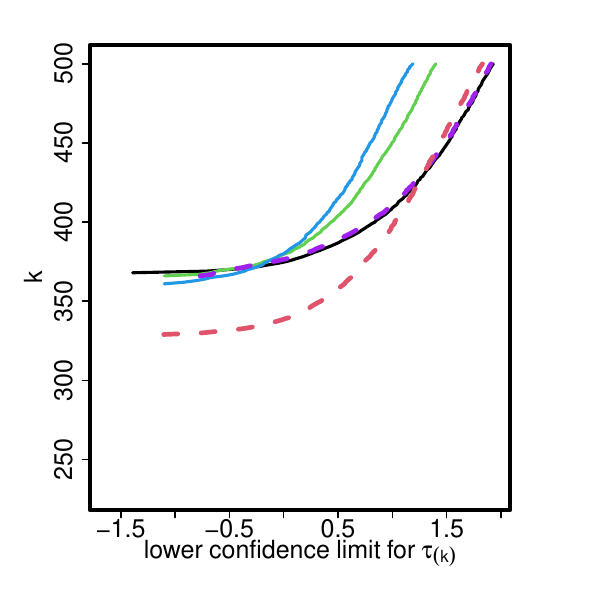}
    \caption{\footnotesize $\sigma = 0.2$, weights \eqref{eq:weighting_van_free}}
    \end{subfigure}%
    \begin{subfigure}[b]{0.33\linewidth}
    \includegraphics[width=\linewidth]{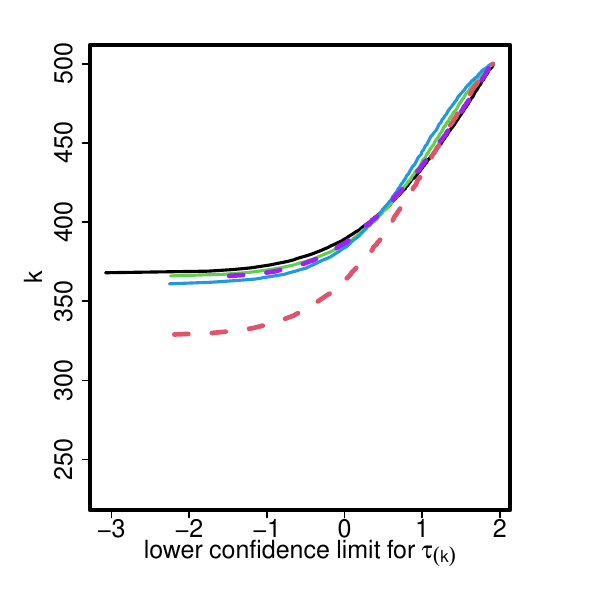}
    \caption{\footnotesize $\sigma = 1$, weights \eqref{eq:weighting_van_free}}
    \end{subfigure}%
    \begin{subfigure}[b]{0.33\linewidth}
    \includegraphics[width=\linewidth]{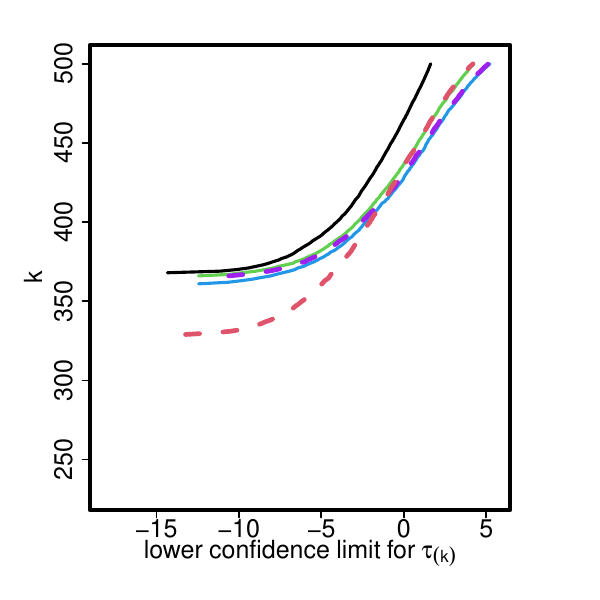}
    \caption{\footnotesize $\sigma = 5$, weights \eqref{eq:weighting_van_free}} 
    \end{subfigure}

    \caption{Median of $90 \%$ lower prediction limits for treatment effect quantiles among treated units across $100$ simulation replications, under stratification scenario S3. 
    The top three plots use weighting scheme \eqref{eq:weighting_van_opt}, 
    and the bottom three plots use weighting scheme \eqref{eq:weighting_van_free}. 
    The method associated with each line is the same as in Figure \ref{fig:sre_case1}.
    }
    
    \label{fig:sre_case2}
\end{figure}

We then consider stratification scheme S3 with mixed strata, where the
weighting schemes play a role. Figure \ref{fig:sre_case2} shows the median of
$90\%$ lower prediction limits for treatment effect quantiles among treated
units across $100$ simulation replications, using either a single stratified
rank statistic or the proposed combined approach. Comparing the two rows in
Figure \ref{fig:sre_case2}, the weighting scheme in
\eqref{eq:weighting_van_opt} has a superior performance, and the improvement
becomes more substantial as we consider lower quantiles of treatment effects.
Comparing the lower prediction limits within each plot, we can see that the
combination method in Section \ref{sec:comb1} performs consistently close to
method using the best individual stratified rank statistic, and the combination
method in Section \ref{sec:comb2} can perform even better, especially for lower
quantiles. These are consistent with the results in Figure \ref{fig:sre_case1}.

\section{Proofs of theorems and corollaries}

\subsection{Proof of Theorem \ref{thm:cre_min_pval}}

To prove Theorem \ref{thm:cre_min_pval}, we need the following lemma. 

\begin{lemma}\label{lem:dist_free_comb}
    For $1\le h \le H$,
    let
    $f^{(h)}(\cdot)$ and $\phi^{(h)}(\cdot)$ be any fixed functions,
    and $t^{(h)}(\bs{z}, \bs{y}) = \sum_{i=1}^n z_i \phi^{(h)}(\rank_i(\bs{y}))$ be a rank sum statistic,
    where the ties are broken using the first method.
    Then $\min_{1\le h \le H} f^{(h)}(t^{(h)}(\bs{Z}, \bs{y}))$ is a distribution-free statistic under the CRE, in the sense that any fixed vectors $\bs{y}, \bs{y}' \in \mathbb{R}^n$,
    $$
    \min_{1\le h \le H} f^{(h)}(t^{(h)}(\bs{Z}, \bs{y})) \sim \min_{1\le h \le H} f^{(h)}(t^{(h)}(\bs{Z}, \bs{y}'))
    $$
    where  $\bs{Z} = (Z_1, \ldots, Z_n)^\top$ is a random treatment assignment from a CRE.
\end{lemma}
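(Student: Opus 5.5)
The plan is to exhibit a single permutation of units that carries the rank vector of $\bs{y}$ onto that of $\bs{y}'$, and to use the exchangeability of $\bs{Z}$ under the CRE. The only property of the tie-breaking rule I need is that $\bs{r}(\bs{y}) \equiv (\rank_1(\bs{y}),\ldots,\rank_n(\bs{y}))$ is always a permutation of $\{1,\ldots,n\}$, i.e.\ there are no ties in the ranks.

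Let $\pi$ be the permutation of $\{1,\ldots,n\}$ defined by $\pi(i) = \rank_i(\bs{y})$. Then every rank sum statistic can be rewritten as
\[
t^{(h)}(\bs{z}, \bs{y}) = \sum_{i=1}^n z_i \phi^{(h)}(\pi(i)) = \sum_{r=1}^n z_{\pi^{-1}(r)} \phi^{(h)}(r),
\]
so $t^{(h)}(\bs{z},\bs{y}) = g^{(h)}(\bs{z}_{\pi^{-1}})$. Here $\bs{z}_{\pi^{-1}} = (z_{\pi^{-1}(1)}, \ldots, z_{\pi^{-1}(n)})$, and $g^{(h)}(\bs{u}) = \sum_r u_r \phi^{(h)}(r)$ depends neither on $\bs{y}$ nor on anything else.

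The key point is that the same permuted vector $\bs{z}_{\pi^{-1}}$ is used for every $h$. Consequently
\[
\min_{h} f^{(h)}(t^{(h)}(\bs{Z},\bs{y})) = \Psi(\bs{Z}_{\pi^{-1}}), \qquad \Psi(\bs{u}) \equiv \min_h f^{(h)}(g^{(h)}(\bs{u})),
\]
where $\Psi$ is a fixed deterministic function. Similarly, with $\pi'$ built from $\bs{y}'$, the combined statistic for $\bs{y}'$ equals $\Psi(\bs{Z}_{\pi'^{-1}})$.

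Under the CRE, $\bs{Z}$ is uniform over all binary vectors with exactly $n_\treat$ ones. This set is invariant under coordinate permutations, and a permutation is a bijection of it, so $\bs{Z}_{\pi^{-1}} \sim \bs{Z} \sim \bs{Z}_{\pi'^{-1}}$. Applying the fixed measurable map $\Psi$ to both sides gives the claimed equality in distribution.

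The only step needing care is the reduction of all $H$ statistics simultaneously through one common permutation. Proving that each $t^{(h)}$ is distribution-free separately would not suffice, since that says nothing about their joint law. Working with the joint law of $(t^{(1)},\ldots,t^{(H)})$ via a common $\pi$ resolves this. Beyond that, I would only check that the index-based tie-breaking indeed makes $\rank(\bs{y})$ a bijection onto $\{1,\ldots,n\}$ for every $\bs{y}$, which follows directly from its definition as a strict total order on the indices.
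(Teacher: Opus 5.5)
Your proposal is correct and follows essentially the same argument as the paper: you rewrite every $t^{(h)}(\bs{z},\bs{y})$ through one common permutation determined by the ranks of $\bs{y}$, and then use the permutation invariance of the CRE assignment distribution. The paper's $\pi$ is the inverse of yours ($\rank_{\pi(i)}(\bs{y}) = i$), and your remark that a single permutation must serve all $h$ simultaneously is exactly what makes the paper's step work for the minimum.
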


\begin{proof}[\bf Proof of Lemma \ref{lem:dist_free_comb}]
    Consider any $\bs{y} \in \mathbb{R}^n$. There must exist a permutation $\pi$ such that $\rank_{\pi(i)}(\bs{y}) = i$ for $1\le i \le n$. 
    We then have, for $1\le h \le H$, 
    \begin{align*}
        t^{(h)}(\bs{z}, \bs{y}) = \sum_{i=1}^n z_i \phi^{(h)}(\rank_i(\bs{y}))
        = \sum_{i=1}^n z_{\pi(i)} \phi^{(h)}(\rank_{\pi(i)}(\bs{y}))
        = \sum_{i=1}^n z_{\pi(i)} \phi^{(h)}(i). 
    \end{align*}
    This implies that 
    \begin{align*}
        \min_{1\le h \le H} f^{(h)}(t^{(h)}(\bs{Z}, \bs{y}))
        & = 
        \min_{1\le h \le H} f^{(h)}\left( \sum_{i=1}^n Z_{\pi(i)} \phi^{(h)}(i) \right)
        \sim 
        \min_{1\le h \le H} f^{(h)}\left( \sum_{i=1}^n Z_{i} \phi^{(h)}(i) \right), 
    \end{align*}
    where the last step follows from the fact that 
    $(Z_1, \ldots, Z_n)^\top \sim (Z_{\pi(1)}, \ldots, Z_{\pi(n)})^\top$ under a CRE. 
    From the above, we know that, regardless of the values of $\bs{y}$,  $\min_{1\le h \le H} f^{(h)}(t^{(h)}(\bs{Z}, \bs{y}))$ follows the same distribution. 
    Therefore, Lemma \ref{lem:dist_free_comb} holds. 
\end{proof}

\begin{proof}[\bf Proof of Theorem \ref{thm:cre_min_pval}]
    When the null hypothesis $\mathcal{H}_{k,c}^{\treat}$ holds, 
    $\bs{\tau} \in \mathcal{H}_{k,c}^{\treat}$, and consequently
    \begin{align*}
        \underset{\delta \in \mathcal{H}_{k,c}^{\treat}}{\inf} t^{(h)}(\bs{Z}, \bs{Y} - \bs{Z} \circ \bs{\delta}) \leq t^{(h)} (\bs{Z}, \bs{Y} - \bs{Z} \circ \bs{\tau}) = t^{(h)}(\bs{Z}, \bs{Y}(0)), 
        \quad (1\le h \le H)  
    \end{align*}
    which, by the fact that $G^{(h)}(\cdot)$ is a monotone nonincreasing function, further implies that   
    \begin{align*}
        G^{(h)} \Big(\underset{\delta \in \mathcal{H}_{k,c}^{\treat}}{\inf} t^{(h)}(\bs{Z}, \bs{Y} - \bs{Z} \circ \bs{\delta}) \Big) \geq G^{(h)}(t^{(h)}(\bs{Z}, \bs{Y}(0))), 
        \quad (1\le h \le H).
    \end{align*}
    By definition, when the null hypothesis $\mathcal{H}_{k,c}^{\treat}$ holds, we then have 
    \begin{align}\label{eq:supp_min_pval_lower_bound}
        p_{k,c}^{\treat, \min} = \underset{1 \leq h \leq H}{\min} G^{(h)} \Big(\underset{\delta \in \mathcal{H}_{k,c}^{\treat}}{\inf} t^{(h)}(\bs{Z}, \bs{Y} - \bs{Z} \circ \bs{\delta}) \Big) \geq \underset{1 \leq h \leq H}{\min} G^{(h)}(t^{(h)}(\bs{Z}, \bs{Y}(0))) .  
    \end{align}
    This implies, for any $\alpha \in [0,1]$, 
    \begin{align*}
        \Pr (p_{k,c}^{\treat,\min} \leq \alpha \text{ and } \mathcal{H}_{k,c}^{\treat} \text{ holds}) &\leq \Pr \big\{ \underset{1 \leq h \leq H}{\min} G^{(h)}(t^{(h)}(\bs{Z}, \bs{Y}(0))) \leq \alpha \big\} = \Pr \big\{ \overline{t}(\bs{Z}, \bs{Y}(0)) \leq \alpha \big\}, 
    \end{align*}
    where the last equality follows from the definition of $\overline{t}(\cdot, \cdot)$. 
    From Lemma \ref{lem:dist_free_comb}, $\overline{t}(\bs{Z}, \bs{y})$ is distribution-free under the CRE. 
    We further have 
    \begin{align*}
        \Pr (p_{k,c}^{\treat,\min} \leq \alpha \text{ and } \mathcal{H}_{k,c}^{\treat} \text{ holds}) & \leq  \Pr \big\{ \overline{t}(\bs{Z}, \bs{Y}(0)) \leq \alpha \big\}
        = \Pr \big\{  \overline{t}(\bs{Z}, \bs{y}) \leq \alpha \big\} = \overline{F}(\alpha),
    \end{align*}
    where $\bs{y}$ can be any fixed vector in $\mathbb{R}^n$ and the last equality follows by definition.
    From the above, 
    we then derive Theorem \ref{thm:cre_min_pval}.
\end{proof}

\subsection{Proof of Corollary \ref{cor:cre_min_pval}}

\begin{proof}[\bf Proof of Corollary \ref{cor:cre_min_pval}]

Consider any $\alpha\in (0,1)$, and let $c_{\alpha} = \sup\{c: \overline{F}(c) \le \alpha\}$. 
Below we consider two cases. 

We first consider the case where $\overline{F}(c_{\alpha}) \le \alpha$. We then have, for any $c\in \mathbb{R}$, $\overline{F}(c) \le \alpha$ if and only if $c\le c_{\alpha}$. Consequently, 
\begin{align*}
    \Pr\{ 
    \overline{F}(p_{k,c}^{\treat,\min}) \le \alpha 
    \text{ and } \mathcal{H}_{k,c}^{\treat} \text{ holds}
    \}
    = 
    \Pr( 
    p_{k,c}^{\treat,\min} \le c_\alpha 
    \text{ and } \mathcal{H}_{k,c}^{\treat} \text{ holds}
    )
    \le 
    \overline{F}(c_{\alpha}) \le \alpha, 
\end{align*}
where the second last inequality follows from Theorem \ref{thm:cre_min_pval}. 

We then consider the case where $\overline{F}(c_{\alpha}) > \alpha$.
We then have, for any $c\in \mathbb{R}$, $\overline{F}(c) \le \alpha$ if and only if $c < c_{\alpha}$. 
Consequently, 
\begin{align*}
    & \quad \ \Pr\{ 
    \overline{F}(p_{k,c}^{\treat,\min}) \le \alpha 
    \text{ and } \mathcal{H}_{k,c}^{\treat} \text{ holds}
    \}
    \\
    & = 
    \Pr( 
    p_{k,c}^{\treat,\min} < c_\alpha 
    \text{ and } \mathcal{H}_{k,c}^{\treat} \text{ holds}
    )
    = 
    \lim_{m\rightarrow \infty}
    \Pr( 
    p_{k,c}^{\treat,\min} \le c_\alpha - 1/m 
    \text{ and } \mathcal{H}_{k,c}^{\treat} \text{ holds}
    )
    \\
    & \le 
    \lim_{m\rightarrow \infty}
    \overline{F}(c_{\alpha}-1/m)
    &  \le \alpha, 
\end{align*}
where the second last inequality follows from Theorem \ref{thm:cre_min_pval}. 

From the above, $\overline{F}(p_{k,c}^{\treat,\min})$ is a valid $p$-value for testing the null hypothesis $\mathcal{H}_{k,c}^{\treat}$, i.e., Corollary \ref{cor:cre_min_pval} holds. 
\end{proof}

\subsection{Proof of Theorem \ref{thm:cre_min_pval_combined}}

\begin{proof}[\bf Proof of Theorem \ref{thm:cre_min_pval_combined}]

From \eqref{eq:supp_min_pval_lower_bound}, 
\begin{align*}
        p_{k,c}^{\treat, \min} & = \underset{1 \leq h \leq H}{\min} G^{(h)} \Big(\underset{\delta \in \mathcal{H}_{k,c}^{\treat}}{\inf} t^{(h)}(\bs{Z}, \bs{Y} - \bs{Z} \circ \bs{\delta}) \Big)   
        = 
        \underset{1 \leq h \leq H}{\min} G^{(h)} \Big( t^{(h)}( \bs{Z}, \bs{Y} - \bs{Z} \circ \bs{\zeta}_{k, c}^\treat ) \Big)
        \\
        & = 
        \overline{t}(\bs{Z}, \bs{Y} - \bs{Z} \circ \bs{\zeta}_{k, c}^\treat), 
\end{align*}
where the second last equality follows from the property of each individual rank sum statistic as studied in \citet{CDLM21quantile} and \citet{ZL24quantile}, 
and the last equality follows by definition. 
Importantly, the infimum of each individual rank sum statistic over the set $\mathcal{H}_{k,c}^{\treat}$ is achieved at the same hypothesized individual treatment effect vector $\xi_{k,c}^{\treat}$.
In addition, for any $\bs{\delta} \in \mathcal{H}_{k,c}^{\treat}$, we have 
\begin{align*}
    \overline{t}(\bs{Z}, \bs{Y} - \bs{Z} \circ \bs{\delta})
    & = 
    \underset{1 \leq h \leq H}{\min} G^{(h)} \Big( t^{(h)}( \bs{Z}, \bs{Y} - \bs{Z} \circ \bs{\delta} ) \Big)
    \\
    & \le 
    \underset{1 \leq h \leq H}{\min} G^{(h)} \Big(\underset{\delta \in \mathcal{H}_{k,c}^{\treat}}{\inf} t^{(h)}(\bs{Z}, \bs{Y} - \bs{Z} \circ \bs{\delta}) \Big)   
    =p_{k,c}^{\treat, \min}, 
\end{align*}
where the first equality follows by definition, and the inequality follows by the fact that $G^{(h)} (\cdot)$ is a nonincreasing function. 
These imply that 
\begin{align*}
    \sup_{\bs{\delta} \in \mathcal{H}_{k,c}^{\treat}} \overline{t}(\bs{Z}, \bs{Y} - \bs{Z} \circ \bs{\delta}) \le  p_{k,c}^{\treat, \min}
    = 
    \overline{t}(\bs{Z}, \bs{Y} - \bs{Z} \circ \bs{\zeta}_{k, c}^\treat)
    \le \sup_{\bs{\delta} \in \mathcal{H}_{k,c}^{\treat}} \overline{t}(\bs{Z}, \bs{Y} - \bs{Z} \circ \bs{\delta}). 
\end{align*}
Thus, the quantities in the above equation must be all equal.

By definition, we know that, for $c\in \mathbb{R}$, 
\begin{align*}
    \overline{F}(c) = 
    \Pr \{
    \overline{t}(\bs{Z}, \bs{y}) \le c
    \}
    = 
    \Pr \{
    - \overline{t}(\bs{Z}, \bs{y}) \ge - c
    \}
    = 
    G^{(-\overline{t})} (-c). 
\end{align*}
We then have 
\begin{align*}
    \overline{F}(p_{k,c}^{\treat, \min}) &= 
    G^{(-\overline{t})} ( - p_{k,c}^{\treat, \min} )
    = 
    G^{(-\overline{t})} \big( - \overline{t}(\bs{Z}, \bs{Y} - \bs{Z} \circ \bs{\zeta}_{k, c}^\treat) \big)
    \\
    & = 
    G^{(-\overline{t})} \big( - \sup_{\bs{\delta} \in \mathcal{H}_{k,c}^{\treat}} \overline{t}(\bs{Z}, \bs{Y} - \bs{Z} \circ \bs{\delta})  \big)\\
    & = 
    G^{(-\overline{t})} \big( \inf_{\bs{\delta} \in \mathcal{H}_{k,c}^{\treat}} -\overline{t}(\bs{Z}, \bs{Y} - \bs{Z} \circ \bs{\delta})  \big).
\end{align*}
Therefore, Theorem \ref{thm:cre_min_pval_combined} holds. 
\end{proof}

\subsection{Proof of Theorem \ref{thm:cre_min_pval_combined_gaussian}}

\begin{proof}[\bf Proof of Theorem \ref{thm:cre_min_pval_combined_gaussian}]
First, from Lemma \ref{lem:dist_free_comb}, 
\begin{align*}
    \tilde{t}(\bs{z}, \bs{y}) = \max_{1\le h\le H} \frac{t^{(h)}(\bs{z}, \bs{y})-\mu^{(h)}}{\sigma^{(h)}}
    = 
    - \min_{1\le h\le H} \left\{  - \frac{t^{(h)}(\bs{z}, \bs{y})-\mu^{(h)}}{\sigma^{(h)}} \right\}
\end{align*}
is distribution-free under the CRE.

Second, we prove the validity of the $p$-value $p_{k,c}^{\treat (\Tilde{t})}$, which follows by the same logic as Theorem 1 in \citet{ZL24quantile}. 
If the null $\mathcal{H}_{k,c}^{\treat}$ holds, then $\bs{\tau} \in \mathcal{H}_{k,c}$, and 
\begin{align*}
    \underset{\delta \in \mathcal{H}_{k,c}^{\treat}}{\inf} \Tilde{t}(\bs{Z}, \bs{Y} - \bs{Z} \circ \bs{\delta}) \leq \Tilde{t}(\bs{Z}, \bs{Y} - \bs{Z} \circ \bs{\tau}) = \Tilde{t}(\bs{Z}, \bs{Y}(0)). 
\end{align*}
Because $G^{(\Tilde{t})}(\cdot)$ is a monotone nonincreasing function, we then have, under the null  $\mathcal{H}_{k,c}^{\treat}$,  
\begin{align*}
    p_{k,c}^{\treat (\Tilde{t})} \equiv G^{(\Tilde{t})}(\underset{\delta \in \mathcal{H}_{k,c}^{\treat}}{\inf} \Tilde{t}(\bs{Z}, \bs{Y} - \bs{Z} \circ \bs{\delta})) \geq G^{(\Tilde{t})} (\Tilde{t}(\bs{Z}, \bs{Y}(0)) ). 
\end{align*}
Consequently, 
\begin{align*}
   \Pr(p_{k,c}^{\treat (\Tilde{t})} \leq \alpha  \text{ and } \mathcal{H}_{k,c}^{\treat} \text{ holds} )  \leq \Pr \Big( G^{(\Tilde{t})} (\Tilde{t}(\bs{Z}, \bs{Y}(0)) ) \leq \alpha \Big) \leq \alpha, 
\end{align*}
where the last inequality uses the fact that $G^{(\Tilde{t})} ( \cdot)$ is the tail probability of $\Tilde{t}(\bs{Z}, \bs{Y}(0))$ and Lemma A4 in \citet{CDLM21quantile}. 

Finally, we prove the equivalent form in \eqref{eq:pval_combined_minpval_gaussian}. 
From \citet{CDLM21quantile} and \citet{ZL24quantile} and as discussed in the proof of Theorem \ref{thm:cre_min_pval_combined}, 
each individual rank sum statistic $\Tilde{t}(\bs{Z}, \bs{Y} - \bs{Z} \circ \bs{\delta} )$ achieves its infimum over $\delta \in \mathcal{H}_{k,c}^{\treat}$ at the same value $\bs{\xi}_{k,c}^{\treat}$. 
Thus, we must have 
\begin{align*}
     \underset{\delta \in \mathcal{H}_{k,c}^{\treat}}{\inf} \Tilde{t}(\bs{Z}, \bs{Y} - \bs{Z} \circ \bs{\delta} )
     = 
     \Tilde{t}(\bs{Z}, \bs{Y} - \bs{Z} \circ \bs{\xi}_{k,c}^{\treat} ),
\end{align*}
which immediately implies \eqref{eq:pval_combined_minpval_gaussian}.

From the above, Theorem \ref{thm:cre_min_pval_combined_gaussian} holds. 
\end{proof}

\subsection{Proof of Corollary \ref{cor:interval_cre}}

\begin{proof}[\bf Proof of Corollary \ref{cor:interval_cre}]
    The proof of Corollary \ref{cor:interval_cre} follows by the same logic as the proof of Theorems 1 and 2 in \citet{ZL24quantile}, and is thus omitted here for conciseness.  
\end{proof}

\subsection{Proof of Theorem \ref{thm:combined_stat_max_sre}}

To prove Theorem \ref{thm:combined_stat_max_sre}, we need the following lemma. 

\begin{lemma}\label{lem:dist_free_comb_sre}
    For $1\le h \le H$,
    let $$t^{(h)}(\bs{z}, \bs{y}) = \sum_{s = 1}^S t_s^{(h)}(\bs{z}_s, \bs{y}_s) = \sum_{s=1}^S \sum_{i=1}^{n_s} z_{si} \phi_s^{(h)}(\rank_i(\bs{y}_s))$$ be a stratified rank sum statistic defined as in \eqref{eq:rank_sum_stratified},
    where the ties are broken using the first method
    and $\phi_s^{(h)}(\cdot)$s are fixed functions,
    and let $f^{(h)}(\cdot)$ be any fixed function.
    Then $\max_{1\le h \le H}$ $f^{(h)}(t^{(h)}(\bs{Z}, \bs{y}))$ is a distribution-free statistic under the SRE, in the sense that any fixed vectors $\bs{y}, \bs{y}' \in \mathbb{R}^n$,
    $$
    \max_{1\le h \le H} f^{(h)}(t^{(h)}(\bs{Z}, \bs{y})) \sim \max_{1\le h \le H} f^{(h)}(t^{(h)}(\bs{Z}, \bs{y}'))
    $$
    where  $\bs{Z} = (Z_1, \ldots, Z_n)^\top$ is a random treatment assignment from a SRE.
\end{lemma}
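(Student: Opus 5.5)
The plan mirrors the proof of Lemma \ref{lem:dist_free_comb}, applied stratum by stratum. Fix any $\bs{y} = (\bs{y}_1, \ldots, \bs{y}_S) \in \mathbb{R}^n$. Because ties are broken deterministically by index, the ranks $(\rank_1(\bs{y}_s), \ldots, \rank_{n_s}(\bs{y}_s))$ within each stratum $s$ form a permutation of $\{1, \ldots, n_s\}$. Hence there is a permutation $\pi_s$ of $\{1,\ldots,n_s\}$ with $\rank_{\pi_s(i)}(\bs{y}_s) = i$. Reindexing each within-stratum sum gives, for every $h$,
\begin{align*}
t^{(h)}(\bs{z}, \bs{y}) = \sum_{s=1}^S \sum_{i=1}^{n_s} z_{s\pi_s(i)} \phi_s^{(h)}(i).
\end{align*}
Crucially, the same permutations $\pi_1, \ldots, \pi_S$ work simultaneously for all $H$ statistics, since they depend only on $\bs{y}$ and not on $h$.

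It follows that $\max_{1\le h\le H} f^{(h)}(t^{(h)}(\bs{Z}, \bs{y})) = g(\bs{Z}^{\pi})$. Here $g(\bs{z}) \equiv \max_{h} f^{(h)}\big(\sum_{s}\sum_{i} z_{si}\phi_s^{(h)}(i)\big)$ is a fixed function not depending on $\bs{y}$, and $\bs{Z}^{\pi}$ is the vector with coordinates $Z_{s\pi_s(i)}$.

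The key distributional step is to show $\bs{Z}^{\pi} \sim \bs{Z}$ under the SRE. Within each stratum, $\bs{Z}_s$ is uniformly distributed over the binary vectors with exactly $n_{s\treat}$ ones, and this distribution is invariant to coordinate permutations, so $(Z_{s\pi_s(1)}, \ldots, Z_{s\pi_s(n_s)}) \sim \bs{Z}_s$. Since the $\bs{Z}_s$ are mutually independent across strata and each $\pi_s$ acts only within its own stratum, the permuted blocks remain independent with the same marginals. This gives joint equality in distribution, and therefore $g(\bs{Z}^{\pi}) \sim g(\bs{Z})$ for every $\bs{y}$. Applying this to both $\bs{y}$ and $\bs{y}'$ yields the claim.

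The only point needing care is exactly this step: the within-stratum permutations must be combined into one block-diagonal permutation that preserves the product structure of the SRE. Once that is noted, the argument is immediate.
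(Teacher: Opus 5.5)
Your proposal is correct and follows the paper's proof essentially step for step: you choose the within-stratum permutations $\pi_s$ that sort the ranks, reindex each $t_s^{(h)}$ (using the same $\pi_s$ for every $h$), and invoke invariance of the SRE assignment distribution under these block-diagonal permutations. Your justification of that last invariance, via within-stratum exchangeability of each CRE and independence across strata, spells out a step the paper only asserts.
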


\begin{proof}[\bf Proof of Lemma \ref{lem:dist_free_comb_sre}]
    Consider any $\bs{y} \in \mathbb{R}^n$. 
    For each stratum $s$, there must exist a permutation $\pi_s$ of $(1,2,\ldots, n_s)$ such that    
    $\rank_{\pi_s(i)}(\bs{y}_s) = i$ for $1\le i \le n_s$. 
    We then have, for $1\le h \le H$ and $1\le s \le S$, 
    \begin{align*}
        t_s^{(h)}(\bs{z}_s, \bs{y}_s) = \sum_{i=1}^{n_s} z_{si} \phi_s^{(h)}(\rank_i(\bs{y}_s))
        =
        \sum_{i=1}^{n_s} z_{s\pi_s(i)} \phi_s^{(h)}(\rank_{\pi_s(i)}(\bs{y}_s))
        = \sum_{i=1}^{n_s} z_{s\pi_s(i)} \phi_s^{(h)}(i). 
    \end{align*}
    This then implies that
    \begin{align*}
        & \quad \ \max_{1\le h \le H} f^{(h)}(t^{(h)}(\bs{Z}, \bs{y}))
        \\
        & = 
        \max_{1\le h \le H} f^{(h)}\Big(
        \sum_{s = 1}^S t_s^{(h)}(\bs{Z}_s, \bs{y}_s)
        \Big)
        =
        \max_{1\le h \le H} f^{(h)}\Big(
        \sum_{s=1}^S \sum_{i=1}^{n_s} Z_{s\pi_s(i)} \phi_s^{(h)}(i)
        \Big)\\
        & \sim 
        \max_{1\le h \le H} f^{(h)}\Big(
        \sum_{s=1}^S \sum_{i=1}^{n_s} Z_{si} \phi_s^{(h)}(i)
        \Big), 
    \end{align*}
    where the last step follows from the fact that 
    \begin{align*}
        (Z_{11}, \ldots, Z_{1n_1}, \ldots, Z_{S1}, \ldots, Z_{Sn_S})
        \sim 
        (Z_{1\pi_1(1)}, \ldots, Z_{1\pi_1(n_1)}, \ldots, Z_{S\pi_S(1)}, \ldots, Z_{S\pi_S(n_S)})
    \end{align*}
    under the SRE. 
    From the above, regardless of the values of $\bs{y}$,  $\max_{1\le h \le H} f^{(h)}(t^{(h)}(\bs{Z}, \bs{y}))$ follows the same distribution. 
    Therefore, Lemma \ref{lem:dist_free_comb_sre} holds. 
\end{proof}

\begin{proof}[\bf Proof of Theorem \ref{thm:combined_stat_max_sre}]

From Lemma \ref{lem:dist_free_comb_sre}, $\tilde{t}(\bs{Z}, \bs{y})$ is distribution-free under the SRE. 
By the same logic as Theorem 6 in \citet{ZL24quantile}, 
\begin{align*}
    p_{k,c}^{1 (\tilde{t})} \equiv G^{(\tilde{t})} \Big( \underset{\delta \in \mathcal{H}_{k,c}^{\treat}}{\inf} \tilde{t}(\bs{Z}, \bs{Y} - \bs{Z} \circ \bs{\delta} )\Big)
\end{align*}
is a valid $p$-value for testing the null hypothesis $\mathcal{H}_{k,c}^{\treat}$. 
Therefore, we derive Theorem \ref{thm:combined_stat_max_sre}. 
\end{proof}

\begin{proof}[\bf Proof of Theorem \ref{thm:combine_then_aggregate}] 
From Lemma \ref{lem:dist_free_comb}, 
the statistic in \eqref{eq:comb_per_stratum} for each stratum $s$ is distribution-free under the complete randomization within the stratum, in the sense that $\tilde{t}_s(\bs{Z}_s, \bs{y}_s) \sim \tilde{t}_s(\bs{Z}_s, \bs{y}'_s)$ for any constant $\bs{y}_s$ and $\bs{y}'_s$. 
By the mutual independence of treatment assignments across strata, we can know that the statistic in \eqref{eq:comb_then_agg} is distribution-free under the SRE, in the sense that $\dtilde{t}(\bs{Z}, \bs{y}) \sim \dtilde{t}(\bs{Z}, \bs{y}')$. 
Theorem \ref{thm:combine_then_aggregate} then follows by the same logic as Theorem 6 in \citet{ZL24quantile}.     
\end{proof}

\end{document}